\newtheorem{theorem}{Theorem}
\newtheorem{lemma}{Lemma}
\newtheorem{corollary}{Corollary}
\begin{document}

\title{On the Arikan Transformations of Binary-Input Discrete Memoryless Channels\footnote{This work was supported by the Natural Science Foundation of China (No. 61977056).}}
\author{Yadong Jiao, Xiaoyan Cheng, Yuansheng Tang\footnote{Corresponding author.}\\
{\it\small School of Mathematical Sciences, Yangzhou University, Jiangsu, China}\\
and Ming Xu\footnote{Email addresses: dx120210046@stu.yzu.edu.cn(Y. Jiao), xycheng@yzu.edu.cn(X. Cheng), ystang@yzu.edu.cn(Y. Tang), mxu@szcu.edu.cn(M. Xu)}
\\
{\it \small Suzhou City University, Jiangsu, China}}
\date{}
\maketitle

\begin{abstract}
The polar codes introduced by Arikan in 2009 achieve the capacity of binary-input discrete memoryless channels (BIDMCs) with low-complexity encoding and decoding. Identifying the unreliable synthetic channels, generated by Arikan transformation during the construction of these polar codes, is crucial. Currently, because of the large size of the output alphabets of synthetic channels, there is no efficient and practical approach to evaluate their reliability in general.
To tackle this problem, by converting the generation of synthetic channels in polar code construction into algebraic operations, in this paper we develop a method to characterize the synthetic channels
as random switching channels of binary symmetric channels when the underlying channels are symmetric.
Moreover, a lower bound for the average number of elements that possess the same likelihood ratio within the output alphabet of any synthetic channel generated in polar codes is also derived.
\end{abstract}
\vspace{1ex}
{\noindent\small{\bf Keywords:}
    	BIDMC; Polar code; Arikan transformation; likelihood ratio profile; random switching channel}

\section{Introduction}

Polar codes introduced by Arikan in 2009 are the first code family achieving the capacity of binary-input discrete memoryless channels (BIDMCs) under successive cancellation (SC) decoding as the code length tends to infinity. Due to their low complexity encoding and decoding, polar codes were selected in 2016 by the 3GPP Group for the uplink/downlink channel control in the 5G standard.

By using the kernel matrix $G$ of order 2, one can transform two independent BIDMCs into two synthetic channels (cf. \cite{Arikan09}) that are equivalents of the parity-constrained-input parallel channel and the parallel broadcast channel defined in \cite{Shamai05}. These synthetic channels, which and their compounds are called Arikan transformations of the underlying channels in this paper, preserve the sum of capacities.
The idea of polar codes proposed in \cite{Arikan09} is as follows. At first, $N=2^k$ independent
copies of the underlying BIDMC are iteratively transformed $k$ times into $N$ synthetic channels
by using the kernel matrix $G$. When $N$ is large enough, the synthetic channels
polarize into a set of reliable channels and a set of unreliable channels. Then, reliable communication is realized by transmitting the information bits only on the reliable channels while the remaining channels are {\it frozen} (i.e., their inputs are transparent for the receiver).

Clearly, for polar codes, it is critical to identify the unreliable synthetic channels.
In literature, several techniques have been proposed to estimate the reliability of the synthetic channels: Monte
Carlo simulation \cite{Arikan09}, density evolution and its Gaussian approximation
\cite{Tanaka09}, \cite{Trifonov12}, efficient degrading
and upgrading method \cite{Vardy13}, polarization weight and $\beta$-expansion \cite{3GPP16}, \cite{He17}, etc.
In general, the ranking of the synthetic channels depends
on the underlying channel. However, it was observed that
there is a partial order (with respect to degradation) between
the synthetic channels, which holds for any underlying
channel. By exploiting the partial order, the
complexity of the code construction can be significantly
reduced (cf. \cite{Urbanke19}, \cite{Siegel19}).

About the determination of the exact reliability for all the synthetic channels generated in polar codes, Arikan proposed in \cite{Arikan09} a recursive algorithm based on the Bhattacharyya parameter. However, this algorithm is efficient only for the case that the underlying channel is a binary erasure channel (BEC). As a simple corollary of the results in \cite{Blackwell1953}, it was shown in \cite{GR20} that two BIDMCs are mutually degraded if and only if their Blackwell measures are the same. For
the synthetic channels obtained in
one-step polarization, their Blackwell measures are investigated in \cite{GR20} by using functionals on the set of BIDMCs and determined entirely
when the underlying channels are symmetric.

In this paper, our main focus is on providing the compactest representations for the synthetic channels generated in polar codes over symmetric BIDMCs. Through these representations, the reliability of the synthetic channels can be computed efficiently.

The paper is organized as follows.
In Section~\ref{sec02}, we introduce a few definitions and some elementary properties of likelihood ratio profile, equivalence, degradation, random switching and symmetry of BIDMCs.
In Section~\ref{sec03}, we introduce some elementary properties of Arikan transformations of arbitrary BIDMCs and convert the Arikan transformations of symmetric BIDMCs into algebraic operations between binary symmetric channels (BSCs).
In Section~\ref{sec04}, we give a brief introduction for the polar codes from a new angle of view and a method for representing the Arikan transformations of symmetric BIDMCs in
the compactest form. A lower bound for the average number of elements that possess the same likelihood ratio within the output alphabet of any synthetic channel generated in polar codes is also derived in this section.
Finally, conclusions are drawn in Section \ref{sec05}.
\section{Binary-Input Discrete Memoryless Channels}
\label{sec02}
In this paper, for any random variable $v$ the notation $v$ may also express, a concrete value in its sample space, or the probability event that $v$ takes a concrete value, if there is no confusion.
For example, $\Pr(v)$ denotes the probability of that $v$ takes a concrete value which is also denoted by $v$.
In particular, if $u$ is a random variable whose sample space is a subset of that of $v$, then $v=u$ expresses the probability event that $v$ takes a concrete value denoted by $u$.

Let $W:x\in\mathcal{X}\mapsto y\in\mathcal{Y}$ be a \emph{binary-input discrete memoryless channel} (BIDMC), where the input $x$ is always supposed to be {\it uniformly distributed} in $\mathcal{X}=\mathbb{F}_2=\{0,1\}$, the finite field of two elements, and the output alphabet $\mathcal{Y}$ is a discrete set.

The {\it symmetric capacity} of $W$ is defined as
\begin{equation}\label{sym_cap} I(W)=\sum_{y\in\mathcal{Y}}\sum_{x\in\mathcal{X}}\frac{1}{2}\Pr(y|x)
\log_2\frac{\Pr(y|x)}{\frac{1}{2}\Pr(y|x=0)+\frac{1}{2}\Pr(y|x=1)},
\end{equation}
which is equal to the {\it mutual information}
$I(x;y)=H(x)+H(y)-H(x,y)$
between the output $y$ and the input $x$.

The \emph{maximum likelihood decoding} (MLD) of the BIDMC $W$ decodes $\hat{y}\in\mathcal{Y}$ into $d_{\text{mld}}(\mathcal{L}_W(\hat{y}))$, where $\mathcal{L}_W(\hat{y})=\Pr(y=\hat{y}|x=0)/\Pr(y=\hat{y}|x=1)$ is the \emph{likelihood ratio} of $\hat{y}$,
$d_{\text{mld}}(l)\in\mathcal{X}$ equals
0 if $l\geq 1$ and 1 otherwise.
Then, the {\it probability of error decoding} for MLD of $W$ is given by
\begin{equation}\label{error probability}
	P_{\epsilon}(W)=\frac{1}{2}\sum_{y\in\mathcal{Y}}\min\{\Pr(y|x=0),\Pr(y|x=1)\}.
\end{equation}

The reliability of $W$ can also be scaled by the Bhattacharyya parameter
\begin{gather}
Z(W)=\sum_{y\in\mathcal{Y}}\sqrt{\Pr(y|x=0)\Pr(y|x=1)}.
\label{Bp}
\end{gather}

\subsection{Likelihood Ratio Profile of BIDMC}
Since the input $x$ of $W$ is uniformly distributed in $\mathcal{X}$,
we have $\Pr(x=0)=\Pr(x=1)=1/2$ and $\Pr(y)=(\Pr(y|x=0)+\Pr(y|x=1))/2$.
Let
\begin{align}
S_W(y)=\frac{\Pr(x=0,y)}{\Pr(y)}=\frac{\Pr(y|x=0)}{\Pr(y|x=0)+\Pr(y|x=1)},
\end{align}
which is the posterior probability of $x=0$ given the output $y$ of $W$. Then, $S_W=S_W(y)$ is a random variable that takes values in the
unit interval $[0,1]$ and has mean
\begin{align}
\mathbf{E}(S_W)=\sum_{y\in\mathcal{Y}}S_W(y)\Pr(y)=\sum_{y\in\mathcal{Y}}\Pr(x=0,y)=1/2.
\end{align}
For $\varepsilon\in[0,1]$, let $P_W(\varepsilon)$ denote the probability $\Pr(S_W=\varepsilon)=\Pr(y\in L_W(\varepsilon))$,
where
\begin{align}
L_W(\varepsilon)&=\{y\in\mathcal{Y}: S_W(y)=\varepsilon\}\nonumber\\
&=\{y\in\mathcal{Y}: \mathcal{L}_W(y)=\varepsilon/\overline{\varepsilon}\}.
\end{align}
Since $\{L_W(\varepsilon)\}_{\varepsilon\in[0,1]}$ is a partition of the output alphabet $\mathcal{Y}$ distinguished by the likelihood ratios, the function $P_W(\varepsilon)$ over $[0,1]$ will be called the {\it likelihood ratio profile} (LRP) of $W$ (c.f. \cite{JCT24}, the original version of this paper).
For example, for $0\leq a<b\leq \overline{a}$, the BIDMC $W=\mathrm{B}_{a,b}$ depicted in Figure~1 has LRP
\begin{align*}
 P_W(\varepsilon)=\left\{
 \begin{array}{ll}
 \frac{1}{2}(a+b), & \text{if }\varepsilon=\frac{a}{a+b}, \\
 \frac{1}{2}(\overline{a}+\overline{b}), & \text{if }\varepsilon=\frac{\overline{a}}{\overline{a}+\overline{b}}, \\
 0,&\text{otherwise}.
 \end{array}
\right.
\end{align*}

\setlength{\unitlength}{0.3cm}
\begin{figure}[t]
\begin{center}
\begin{picture}(12,7)
\put(2,0){\vector(4,2){8}}
\put(2,4){\vector(4,-2){8}}
\put(2,0){\vector(4,0){8}}
\put(2,4){\vector(4,0){8}}

\put(1,3.8){\mbox{\tiny $0$}}
\put(1,-0.2){\mbox{\tiny $1$}}
\put(10.5,3.8){\mbox{\tiny $0$}}
\put(10.5,-0.2){\mbox{\tiny $1$}}

\put(5.5,4.4){\mbox{\tiny $a$}}
\put(5.5,-1){\mbox{\tiny $\overline{b}$}}
\put(3.2,2.5){\mbox{\tiny $\overline{a}$}}
\put(3.2,1.){\mbox{\tiny $b$}}

\end{picture}
\end{center}
\caption{Transition probabilities of $\mathrm{B}_{a,b}$.}
\end{figure}
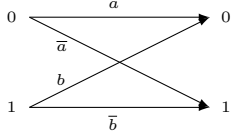

Note that
$\text{m}_{W}=\sum_{\varepsilon\in[0,1]}P_W(\varepsilon)\delta_{\varepsilon}$ is the  probability law of random variable $S_W$ and called the Blackwell measure of $W$ (cf. \cite{GR20}), where $\delta_{\varepsilon}$ denotes the Dirac measure centered on $\varepsilon$.

For any function $f:[0,1]\rightarrow \mathbb{R}$, let $I_f$ be the functional on the collection of BIDMCs defined by
\begin{align}
I_f(W)=\mathbf{E}(f(S_W))=\sum_{y\in\mathcal{Y}}f(S_W(y))\Pr(y)=\sum_{\varepsilon\in[0,1]}
f(\varepsilon)P_W(\varepsilon).
\end{align}
Then, many parameters for measuring of reliability of BIDMCs can be expressed as such functionals (c.f. \cite{GR20}).
For example, for any BIDMC $W$,
\begin{enumerate}
  \item $I_f(W)=I(W)$ if $f(\varepsilon)=1+\varepsilon\log_2 \varepsilon+\overline{\varepsilon}\log_2 \overline{\varepsilon}$;
  \item $I_f(W)=P_{\epsilon}(W)$ if $f(\varepsilon)=(1-|2\varepsilon-1|)/2=\min\{\varepsilon,\overline{\varepsilon}\}$;
  \item $I_f(W)=Z(W)$ if $f(\varepsilon)=2\sqrt{\varepsilon\overline{\varepsilon}}$.
\end{enumerate}

Since for any BIDMC the likelihood ratio constitutes a sufficient
statistic with respect to decoding (c.f.~\cite{Urbanke08}), two BIDMCs $W$, $W'$ will be said to be \emph{equivalent} as in \cite{JCT24}, and written as $W\cong W'$, if their LRPs are the same, i.e.,
$P_{W}(\varepsilon)=P_{W'}(\varepsilon)$ holds for all $\varepsilon\in[0,1]$.

For any BIDMC $W:x\in \mathcal{X}\mapsto y\in\mathcal{Y}$, another
BIDMC $W':x\in \mathcal{X}\mapsto y'\in\mathcal{Y}'$ is said a \emph{degradation channel} of $W$, and written as $W'\preccurlyeq W$,
if there is a channel $Q:\mathcal{Y}\rightarrow \mathcal{Y}'$ as depicted in Figure~2 such that
\begin{align}
\Pr(y'|x)=\sum_{y\in\mathcal{Y}}\Pr(y|x)Q(y'|y),\label{c05}
\end{align}
where $Q(b|a)$ is the probability of that $Q$ transmits $a\in\mathcal{Y}$ into $b\in\mathcal{Y}'$. The channel $Q$ is also called the \emph{intermediate channel} of the degradation $W'\preccurlyeq W$.
Clearly, from $W'\preccurlyeq W$ and $W\preccurlyeq W''$ one can get $W'\preccurlyeq W''$, i.e., the degradation of BIDMCs has {\it transitivity}. Furthermore, we have the following theorem (c.f. \cite{Blackwell1953}, \cite{GR20}, \cite{JCT24}).

\setlength{\unitlength}{0.3cm}
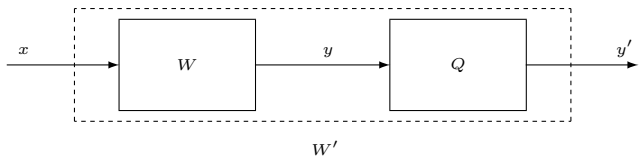
\begin{figure}[t]
\begin{center}
\begin{picture}(30,6.5)

\put(6,2){\framebox(6,4)[]{{\tiny $W$}}}
\put(18,2){\framebox(6,4)[]{{\tiny $Q$}}}
\put(14.5,0){{\tiny $W'$}}
\put(1.5,4.5){{\tiny $x$}}
\put(15,4.5){{\tiny $y$}}
\put(28,4.5){{\tiny $y'$}}
\put(1,4){\vector(1,0){5}}
\put(12,4){\vector(1,0){6}}
\put(24,4){\vector(1,0){5}}
\put(4,1.5){\dashbox{0.2}(22,5)[]}

\end{picture}
\end{center}
\caption{Degradation $W'\preccurlyeq W$ with intermediate channel $Q$.}
\end{figure}

\begin{theorem}\label{theo20}
For any BIDMCs $W:x\in\mathcal{X}\mapsto y\in\mathcal{Y}$ and $W':x\in\mathcal{X}\mapsto y'\in\mathcal{Y}'$,
we have $W\preccurlyeq W'\preccurlyeq W$ if and only if $W\cong W'$, i.e., $P_{W}(\varepsilon)$ equals $P_{W'}(\varepsilon)$ for any $\varepsilon\in[0,1]$.
\end{theorem}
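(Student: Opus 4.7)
The plan is to prove the two implications of the theorem separately: I will handle $W \cong W' \Rightarrow W \preccurlyeq W' \preccurlyeq W$ by an explicit construction of an intermediate channel, and $W \preccurlyeq W' \preccurlyeq W \Rightarrow W \cong W'$ by a Jensen-type argument applied to a rich family of convex functionals $I_f$.

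For the first implication, assume $P_W(\varepsilon) = P_{W'}(\varepsilon)$ for every $\varepsilon \in [0,1]$, and define the intermediate channel $Q: \mathcal{Y} \to \mathcal{Y}'$ by $Q(y'|y) = \Pr(y')/P_{W'}(\varepsilon)$ whenever $S_{W'}(y') = S_W(y) = \varepsilon$ and $Q(y'|y) = 0$ otherwise, so $Q$ resamples $y'$ within the likelihood-ratio level set of $W'$ that matches that of $y$. Since $\sum_{y' \in L_{W'}(\varepsilon)} \Pr(y') = P_{W'}(\varepsilon)$, the rows of $Q$ sum to one. For fixed $y' \in L_{W'}(\varepsilon)$ I verify (\ref{c05}) using that the uniform input gives $\Pr(y'|x=0) = 2\varepsilon\Pr(y')$ together with $\sum_{y \in L_W(\varepsilon)} \Pr(y|x=0) = 2\varepsilon P_W(\varepsilon)$; substituting into the right-hand side of (\ref{c05}) yields exactly $2\varepsilon\Pr(y') = \Pr(y'|x=0)$, and the case $x=1$ is symmetric. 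Interchanging the roles of $W$ and $W'$ produces the reverse degradation.

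For the second implication, suppose $W \preccurlyeq W' \preccurlyeq W$ and let $Q$ be an intermediate channel realizing $W' \preccurlyeq W$. A direct rearrangement of (\ref{c05}) gives $S_{W'}(y') = \sum_y S_W(y)\Pr(y)Q(y'|y)/\Pr(y')$, exhibiting $S_{W'}$ as the conditional expectation of $S_W$ given $y'$ under the natural coupling induced by $Q$. Jensen's inequality then yields $I_f(W') \leq I_f(W)$ for every convex $f:[0,1]\to\mathbb{R}$; the reverse degradation produces the opposite inequality, so $I_f(W) = I_f(W')$ for all convex $f$. The main obstacle is the final step, namely passing from equality of every convex functional to equality of the LRPs themselves. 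I plan to handle it by applying the equality to the one-parameter family $f_t(\varepsilon) = \max(\varepsilon - t, 0)$ with $t \in [0,1]$; since the map $t \mapsto I_{f_t}(W)$ uniquely determines the Blackwell measure $\text{m}_W$ (its distributional second derivative recovers the mass function), this forces $\text{m}_W = \text{m}_{W'}$, equivalently $P_W = P_{W'}$. Because the alphabets $\mathcal{Y}$ and $\mathcal{Y}'$ are discrete, the two measures are supported on a common finite set of candidate points, so in practice only finitely many values of $t$ are needed and the last step reduces to a finite linear system; alternatively one could invoke the classical Blackwell--Sherman--Stein equivalence cited in \cite{Blackwell1953, GR20}.
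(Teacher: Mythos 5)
Your argument is correct. The paper itself does not prove Theorem~\ref{theo20}; it merely states it and points to \cite{Blackwell1953}, \cite{GR20}, \cite{JCT24}, so there is no in-paper proof to compare against. Your write-up therefore supplies something the paper leaves implicit, and both halves are sound.

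For the forward direction, the intermediate channel you construct, $Q(y'\mid y)=\Pr(y')/P_{W'}(\varepsilon)$ on matching level sets $L_W(\varepsilon)\to L_{W'}(\varepsilon)$, is precisely the ``resample within the same likelihood-ratio class'' kernel; the row-sum and the identity $\Pr(y\mid x=0)=2S_W(y)\Pr(y)$ are exactly what make \eqref{c05} hold once $P_W=P_{W'}$. (One should just note that outputs with $\Pr(y)=0$ can be assigned arbitrary rows of $Q$, but since they contribute nothing this is cosmetic.) For the converse, the observation that \eqref{c05} turns $S_{W'}$ into a conditional expectation of $S_W$ under the coupling $\mu^{y'}(y)=\Pr(y)Q(y'\mid y)/\Pr(y')$, combined with Jensen's inequality in both directions of the mutual degradation, cleanly yields $I_f(W)=I_f(W')$ for all convex $f$. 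Extracting the Blackwell measure from the one-parameter family $f_t(\varepsilon)=\max(\varepsilon-t,0)$ is the standard Choquet/distributional-derivative step and closes the gap: $t\mapsto I_{f_t}$ is piecewise linear with slope $-\mathrm{m}_W([t,1])$, so equality of these functions forces equality of the measures, and since the alphabets are finite only finitely many kinks occur. This is essentially the Blackwell--Sherman--Stein route \cite{Blackwell1953}, \cite{GR20} that the paper gestures at, but you have made both implications explicit and elementary rather than just citing them, which is a genuine addition to what the paper offers at this point.
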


\subsection{Random Switching of BIDMCs}
For any positive integer $n$, let $[n]$ denote the set $\{0,1,\cdots\,n-1\}$ of integers.

Assume that $\{\mathcal{Y}_j\}_{j\in[n]}$ is a partition of the output alphabet $\mathcal{Y}$ of the BIDMC $W$ such that, for each $j\in[n]$, the probability
\begin{equation*}
	\Pr(y\in\mathcal{Y}_j| x=a)=q_j
\end{equation*}
is independent of $a\in\mathcal{X}$. Clearly, we have $q_j=\Pr(y\in\mathcal{Y}_j),\, j\in[n]$ and that $(q_0,q_1,\ldots,q_{n-1})$ is a probability distribution vector. For $j\in[n]$, let $W_j:x\in\mathcal{X}\mapsto y_j\in\mathcal{Y}_j$ denote the synthetic BIDMC with transition probabilities
\begin{equation}\label{60}
	\Pr(y_j| x)=\frac{1}{q_j}\Pr(y=y_j| x),
\end{equation}
which, in accordance with our assumption regarding the notations, should be understood as
\begin{equation*}
	\Pr(y_j=b| x=a)=\frac{1}{q_j}\Pr(y=b| x=a), \text{ for any }b\in\mathcal{Y}_j\text{ and }a\in\mathcal{X}.\label{60'}
\end{equation*}

Since the input $x$ of the channel $W$ may be seen as randomly being transmitted over the sub-channels $\{W_j\}_{j\in[n]}$, and over $W_j$ with probability $q_j$ for each $j\in[n]$,
as depicted in Figure~3,
$W$ is also called a \emph{random switching channel} (RSC) of $\{W_j\}_{j\in[n]}$, and written as
\begin{align}
W=\sum_{j\in[n]}q_jW_j.\label{rr02}
\end{align}
We note that each pair of the output alphabets of the sub-channels $\{W_j\}_{j\in[n]}$ are disjoint, it is supposed naturally that the receiver knows the exact sub-channel used for each transmission.
Furthermore, the sub-channels $\{W_j\}_{j\in[n]}$ should be independent, i.e., their inputs
are independent random variables.

\setlength{\unitlength}{0.3cm}
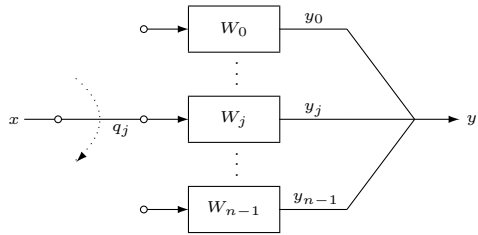
\begin{figure}[t]
\begin{center}
\begin{picture}(22,10)

\put(8,8){\framebox(4,2)[]{{\tiny $W_0$}}}
\put(8,4){\framebox(4,2)[]{{\tiny $W_j$}}}
\put(8,0){\framebox(4,2)[]{{\tiny $W_{n-1}$}}}
\put(10,2.5){\tiny $\vdots$}
\put(10,6.5){\tiny $\vdots$}
\multiput(6,1)(0,4){3}{\circle{.3}}
\multiput(6.15,1)(0,4){3}{\vector(1,0){1.85}}
\put(2.35,5){\line(1,0){3.5}}
\put(2.2,5){\circle{.3}}
\put(.7,5){\line(1,0){1.35}}
\qbezier[15](3.,3.2)(5.1,5)(3.,6.8)
\put(3.2,3.3){\vector(-3,-3){0.2}}
\put(0,4.8){\tiny $x$}
\multiput(12,1)(0,4){3}{\line(1,0){3}}
\put(18.,5){\line(-1,0){3}}
\put(18.,5){\line(-3,-4){3}}
\put(18.,5){\line(-3,4){3}}
\put(18.,5){\vector(1,0){2}}
\put(20.3,4.9){\tiny $y$}
\put(13.1,9.4){\tiny $y_0$}
\put(13.1,5.4){\tiny $y_j$}
\put(12.6,1.4){\tiny $y_{n-1}$}
\put(4.6,4.5){\tiny $q_j$}
\end{picture}
\end{center}
\caption{An RSC of BIDMCs $\{W_j\}_{j\in[n]}$. For each transmission, the sub-channel $W_j$ is chosen with probability $q_j$ over which the data is transmitted.}
\end{figure}

For the RSCs of BIDMCs, we have the following theorem
which shows the LRP and the notation defined by (\ref{rr02}) admit some natural operations (c.f. \cite{GR20}, \cite{JCT24}).

\begin{theorem}
Let $W=\sum_{j\in[n]}q_jW_j$ be an RSC of some independent BIDMCs $\{W_j\}_{j\in[n]}$.

\noindent
\begin{enumerate}
\item For any $\varepsilon\in[0,1]$, we have
\begin{align}
P_W(\varepsilon)=\sum_{j\in[n]}q_jP_{W_j}(\varepsilon).\label{62'}
\end{align}
\item Assume $W'=\sum_{j\in[n]}q'_jW'_j$ is a BIDMC that is independent of\ $W$, where $W'_j$ is equivalent to $W_j$ for each $j\in[n]$. Then,
for any $p\in[0,1]$ we have
\begin{align}
pW+\bar{p}W'\cong\sum_{j\in[n]}pq_jW_j+\sum_{j\in[n]}\overline{p}q'_jW'_j
\cong\sum_{j\in[n]}(pq_j+\overline{p}q'_j)W_j.\label{c00}
\end{align}
\end{enumerate}
\end{theorem}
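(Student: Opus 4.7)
The plan is to handle the two parts in order, with part (1) doing the real work and part (2) following from it essentially by bookkeeping.

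For part (1), I would first compute the posterior probability $S_W(y)$ for an arbitrary output $y \in \mathcal{Y}_j$. Using the defining relation $\Pr(y=b\mid x=a) = q_j \Pr(y_j=b\mid x=a)$ for $b\in\mathcal{Y}_j$, the factor $q_j$ cancels in the numerator and denominator of $S_W(y)$, giving $S_W(y) = S_{W_j}(y)$. Consequently, $L_W(\varepsilon) \cap \mathcal{Y}_j = L_{W_j}(\varepsilon)$, and since $\{\mathcal{Y}_j\}_{j\in[n]}$ is a partition of $\mathcal{Y}$, the set $L_W(\varepsilon)$ itself is the disjoint union $\bigsqcup_{j} L_{W_j}(\varepsilon)$. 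Summing the probabilities and using $\Pr(y=b) = q_j \Pr(y_j=b)$ for $b\in\mathcal{Y}_j$ then yields
\begin{equation*}
P_W(\varepsilon) = \sum_{j\in[n]} \sum_{b\in L_{W_j}(\varepsilon)} q_j \Pr(y_j=b) = \sum_{j\in[n]} q_j P_{W_j}(\varepsilon),
\end{equation*}
which is (\ref{62'}).

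For part (2), I would apply part (1) twice. Viewing $pW + \bar p W'$ itself as an RSC of the two sub-channels $W$ and $W'$ (with mixing weights $p$ and $\bar p$), part (1) gives
\begin{equation*}
P_{pW + \bar p W'}(\varepsilon) = p P_W(\varepsilon) + \bar p P_{W'}(\varepsilon) = \sum_{j\in[n]} p q_j P_{W_j}(\varepsilon) + \sum_{j\in[n]} \bar p q_j' P_{W_j'}(\varepsilon).
\end{equation*}
Applying part (1) again to the middle expression in (\ref{c00}), viewed as an RSC of the $2n$ independent sub-channels, shows its LRP coincides with the right-hand side above. Finally, using the hypothesis $W_j' \cong W_j$, i.e.\ $P_{W_j'}(\varepsilon) = P_{W_j}(\varepsilon)$, the two sums collapse into $\sum_{j\in[n]}(pq_j + \bar p q_j')P_{W_j}(\varepsilon)$, which by part (1) is the LRP of the rightmost RSC in (\ref{c00}). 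Since equality of LRPs is exactly the definition of $\cong$, the chain of equivalences follows.

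The computations are routine; the one point worth being careful about is the interpretation at the boundary. The three channels in (\ref{c00}) a priori have different output alphabets (disjoint copies arising from the random-switching construction), so they cannot in general be equal as channels, only equivalent; this is why the statement uses $\cong$ rather than $=$. The independence hypotheses on $\{W_j\}$ and between $W, W'$ are needed precisely to guarantee that the compounded random switchings are well-defined RSCs, so that part (1) is applicable to each expression. Apart from that, no nontrivial obstacle arises.
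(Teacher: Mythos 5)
Your proof is correct. The paper itself does not supply a proof of this theorem (it is stated with a reference to \cite{GR20} and \cite{JCT24}), but your argument is exactly the canonical one: in part (1) the factor $q_j$ cancels in $S_W(y)$ for $y\in\mathcal{Y}_j$, so $S_W(y)=S_{W_j}(y)$, the level set $L_W(\varepsilon)$ decomposes as the disjoint union $\bigsqcup_j L_{W_j}(\varepsilon)$, and summing $\Pr(y=b)=q_j\Pr(y_j=b)$ over this decomposition gives \eqref{62'}; part (2) is then a mechanical application of part (1) to the three RSCs in \eqref{c00} (using associativity of random switching to re-parenthesize the nested RSC) together with the hypothesis $P_{W'_j}=P_{W_j}$ and the definition of $\cong$ as equality of LRPs. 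Nothing is missing, and your remark that $\cong$ rather than $=$ is forced because the output alphabets differ is the right caveat.
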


For $\varepsilon\in[0,1]$, let $\mathrm{B}(\varepsilon)$ denote the BSC with crossover probability $ \varepsilon $ and $\mathrm{E}(\varepsilon)$ the BEC with erasure probability $\varepsilon$, respectively. Clearly, for any $\varepsilon,\sigma,q\in[0,1]$ we have $\mathrm{B}(\varepsilon)\cong\mathrm{B}(\overline{\varepsilon})\cong \mathrm{B}_{\varepsilon,\overline{\varepsilon}}$,
$q\mathrm{E}(\varepsilon)+\overline{q}\mathrm{E}(\sigma)\cong \mathrm{E}(q\varepsilon+\overline{q}\sigma)$ and
	$q\mathrm{B}(1/2)+\overline{q}\mathrm{B}(0)\cong \mathrm{E}(q)$, where $\mathrm{B}(0)$ and $\mathrm{B}(1/2)$ are the noiseless channel and the completely noisy channel respectively.

A BIDMC $W$ is said to be \emph{symmetric} (c.f. \cite{JCT24}) if its LRP is symmetric with respect to $1/2$, i.e.,
\begin{gather}
P_W(\varepsilon)=P_W(\overline{\varepsilon}),\ \text{for }\varepsilon\in[0,1].
\end{gather}
Clearly, the BIDMC $W$ is symmetric if and only if it is equivalent to an RSC of some BSCs.
For $n\geq 1$, let $\mathbb{B}_n$ denote the set of symmetric BIDMCs that are equivalents of RSCs of $n$ BSCs.
For $n>1$, let $\mathbb{B}^*_n=\mathbb{B}_n\setminus\mathbb{B}_{n-1}$.
Clearly, a BIDMC $W$ belongs to $\mathbb{B}^*_n$ if and only if
$W\cong\sum_{i\in[n]}p_i\mathrm{B}(\varepsilon_i)$ for some positive numbers $\{p_i\}_{i\in[n]}$ with sum 1 and $0\leq \varepsilon_1<\varepsilon_2<\cdots<\varepsilon_{n}\leq 1/2$.

Notice that the symmetry of BIDMCs defined here is based upon equivalence and slightly different from those defined in literature.
For example, for any $p\in(0,1)$ with $p\neq 1/2$ the channel given in Figure~4 is symmetric according to our definition, but according to that noted in \cite{Arikan09}.

\setlength{\unitlength}{0.3cm}
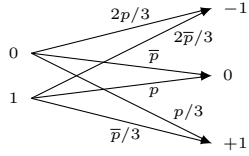
\begin{figure}[t]
\begin{center}
\begin{picture}(12,7)
\put(2,4){\vector(4,1){8}}
\put(2,4){\vector(8,-1){8}}
\put(2,4){\vector(4,-2){8}}
\put(2,2){\vector(4,2){8}}
\put(2,2){\vector(8,1){8}}
\put(2,2){\vector(4,-1){8}}
\put(1,3.8){\mbox{\tiny $0$}}
\put(1,1.8){\mbox{\tiny $1$}}
\put(10.5,5.8){\mbox{\tiny $-1$}}
\put(10.5,2.8){\mbox{\tiny $0$}}
\put(10.5,-0.2){\mbox{\tiny $+1$}}
\put(5.5,.2){\mbox{\tiny $\overline{p}/3$}}
\put(5.5,5.5){\mbox{\tiny $2p/3$}}
\put(8.3,1.2){\mbox{\tiny $p/3$}}
\put(8.3,4.5){\mbox{\tiny $2\overline{p}/3$}}
\put(7.2,2.2){\mbox{\tiny $p$}}
\put(7.2,3.7){\mbox{\tiny $\overline{p}$}}

\end{picture}
\end{center}
\caption{A symmetric BIDMC which is equivalent to $\mathrm{B}(p)$.}
\end{figure}

\section{Arikan Transformations of BIDMCs}
\label{sec03}

\subsection{Definition and LRPs of Arikan Transformations}
Let $(u_0,u_1)$ be a random vector uniformly distributed over $\mathcal{X}^2$.
Assume that $u_0+u_1$ is transmitted over $W_0:x_0\in\mathcal{X}\mapsto y_0\in\mathcal{Y}_0$, whereas $u_1$ is transmitted over $W_1:x_1\in\mathcal{X}\mapsto y_1\in\mathcal{Y}_1$, respectively,
where the BIDMCs $W_0$ and $W_1$ are independent.
Let $A_0(W_0,W_1)$ denote the synthetic BIDMC: $u_0\mapsto(y_0,y_1)$ with transition probabilities
\begin{align}
  \Pr(y_0,y_1|u_0)=\frac{1}{2}\sum_{u_1\in\mathcal{X}}\Pr(y_0|x_0=u_0+u_1)\Pr(y_1|x_1=u_1).\label{at06}
\end{align}
Let $A_1(W_0,W_1)$ denote the synthetic BIDMC: $u_1\mapsto(y_0,y_1,u_0)$ with transition probabilities
\begin{align}
  \Pr(y_0,y_1,u_0|u_1)=\frac{1}{2}\Pr(y_0|x_0=u_0+u_1)\Pr(y_1|x_1=u_1).\label{at07}
\end{align}

We note that $A_0(W_0,W_1)$ and $A_1(W_0,W_1)$ are just the synthetic BIDMCs denoted by $W_0\boxtimes W_1$ and $W_0\circledast W_1$, respectively, in \cite{Hirche18} and \cite{GR20}.
Since these synthetic BIDMCs played important roles in the proposal of Arikan's polar codes, we will call them and their compositions as \emph{Arikan Transformations} of some underlying BIDMCs as in \cite{JCT24}.
For example,
$A_0(A_1(W_0,W_1),W_2)$ is an Arikan transformation of independent BIDMCs $W_0,W_1,W_2$.
For the transformations $A_0(W_0,W_1)$ and $A_1(W_0,W_1)$, their Blackwell measures are investigated in \cite{GR20} via their functionals, we will give a direct method for computing their LRPs from those of $W_0$ and $W_1$.

For real numbers $a,b\in[0,1]$, let $a\star b=\bar{a}b+a\bar{b}$.
Clearly, we have $a\star b=b\star a$
and
\begin{gather*}
\overline{a}\star b=ab+\overline{a}\overline{b}=
(a+\overline{a})(b+\overline{b})-(\bar{a}b+a\bar{b})
=\overline{a\star b},\\
(a\star b)\star c=\bar{a}bc+a\bar{b}c+ab\overline{c}+\overline{a}\overline{b}\overline{c}
=a\star(b\star c).
\end{gather*}

\begin{theorem}\label{theo50}
Let $A_0=A_0(W_0,W_1)$ and $A_1=A_1(W_0,W_1)$ be the Arikan transformations of independent BIDMCs $W_0,W_1$. For any $\varepsilon\in[0,1]$, we have
\begin{align}
P_{A_0}(\varepsilon)=&\sum_{\overline{\varepsilon_0\star\varepsilon_1}=\varepsilon}
P_{W_0}(\varepsilon_0)P_{W_1}(\varepsilon_1),\label{ggg1}\\
P_{A_1}(\varepsilon)=&\sum_{\varepsilon_0\varepsilon_1/
\overline{\varepsilon_0\star\varepsilon_1}=\varepsilon}
\overline{\varepsilon_0\star\varepsilon_1}
P_{W_0}(\varepsilon_0)P_{W_1}(\varepsilon_1)\nonumber\\
&+\sum_{\overline{\varepsilon_0}\varepsilon_1/
(\varepsilon_0\star\varepsilon_1)=\varepsilon}
(\varepsilon_0\star\varepsilon_1)
P_{W_0}(\varepsilon_0)P_{W_1}(\varepsilon_1).\label{ggg2}
\end{align}
\end{theorem}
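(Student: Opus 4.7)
The plan is to compute $S_{A_k}$ on every output of the synthetic channel directly from (\ref{at06})--(\ref{at07}), express it in the variables $\varepsilon_i=S_{W_i}(y_i)$, and then aggregate the joint output probabilities grouped by the resulting value of $S_{A_k}$. The key simplification is the identity $\Pr(y_i\,|\,x_i=0)=2\varepsilon_i\Pr(y_i)$ and $\Pr(y_i\,|\,x_i=1)=2\overline{\varepsilon_i}\Pr(y_i)$, which is immediate from the definition of $S_{W_i}$ together with the uniformity of $x_i$. This converts the bilinear sums in (\ref{at06}) and (\ref{at07}) into expressions that factor through $\Pr(y_0)\Pr(y_1)$ and combine cleanly via the algebra of the star operation recorded just before the theorem.

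For $A_0$, plugging the identity into (\ref{at06}) gives $\Pr(y_0,y_1\,|\,u_0=0)=2\overline{\varepsilon_0\star\varepsilon_1}\Pr(y_0)\Pr(y_1)$ and $\Pr(y_0,y_1\,|\,u_0=1)=2(\varepsilon_0\star\varepsilon_1)\Pr(y_0)\Pr(y_1)$. Hence $\Pr(y_0,y_1)=\Pr(y_0)\Pr(y_1)$ and $S_{A_0}(y_0,y_1)=\overline{\varepsilon_0\star\varepsilon_1}$. By independence of $W_0$ and $W_1$, summing $\Pr(y_0)\Pr(y_1)$ over the pairs $(y_0,y_1)$ with $S_{W_i}(y_i)=\varepsilon_i$ returns $P_{W_0}(\varepsilon_0)P_{W_1}(\varepsilon_1)$, and grouping by the induced value $\varepsilon=\overline{\varepsilon_0\star\varepsilon_1}$ produces (\ref{ggg1}).

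For $A_1$, since $u_0$ is part of the output, I would split the computation into the two cases $u_0=0$ and $u_0=1$. The same substitution applied to (\ref{at07}) yields $\Pr(y_0,y_1,0)=\overline{\varepsilon_0\star\varepsilon_1}\Pr(y_0)\Pr(y_1)$ with posterior $S_{A_1}(y_0,y_1,0)=\varepsilon_0\varepsilon_1/\overline{\varepsilon_0\star\varepsilon_1}$, and $\Pr(y_0,y_1,1)=(\varepsilon_0\star\varepsilon_1)\Pr(y_0)\Pr(y_1)$ with posterior $S_{A_1}(y_0,y_1,1)=\overline{\varepsilon_0}\varepsilon_1/(\varepsilon_0\star\varepsilon_1)$. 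Aggregating each case separately by $\varepsilon$ then yields exactly the two sums on the right-hand side of (\ref{ggg2}).

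The algebra is routine once the substitution $\Pr(y\,|\,x=a)=2\Pr(y)\cdot\bigl[\varepsilon\text{ or }\overline{\varepsilon}\bigr]$ is in place, so I do not expect any serious obstacle. The only point requiring mild care is the degenerate pairs with $\varepsilon_0\star\varepsilon_1\in\{0,1\}$, for which the ratios defining $S_{A_1}$ have a vanishing denominator; but the joint probability $\Pr(y_0,y_1,u_0)$ vanishes on exactly those pairs, so they contribute nothing to either sum in (\ref{ggg2}) and may be harmlessly excluded from the indexing.
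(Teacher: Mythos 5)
Your proposal is correct and follows essentially the same route as the paper: both compute the posterior (equivalently, likelihood ratio) of each output in terms of $\varepsilon_0,\varepsilon_1$ via the factorizations of~(\ref{at06})--(\ref{at07}), derive the weights $\Pr(y_0,y_1)=\Pr(y_0)\Pr(y_1)$, $\Pr(y_0,y_1,u_0=0)=\overline{\varepsilon_0\star\varepsilon_1}\Pr(y_0)\Pr(y_1)$, $\Pr(y_0,y_1,u_0=1)=(\varepsilon_0\star\varepsilon_1)\Pr(y_0)\Pr(y_1)$, and then aggregate over $L_{W_0}(\varepsilon_0)\times L_{W_1}(\varepsilon_1)$. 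The only surface difference is that you normalize by $\Pr(y_i)$ at the outset (writing $\Pr(y_i|x_i=0)=2\varepsilon_i\Pr(y_i)$) whereas the paper manipulates ratios of the raw conditionals, and you add an explicit remark on the degenerate pairs with $\varepsilon_0\star\varepsilon_1\in\{0,1\}$, which the paper leaves implicit; neither change affects the substance.
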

\begin{proof}
Clearly, for $y_0\in L_{W_0}(\varepsilon_0)$ and $y_1\in L_{W_1}(\varepsilon_1)$, we have
\begin{align}
&\frac{\Pr(y_0,y_1|u_0=0)}{\Pr(y_0,y_1|u_0=1)}
=\frac{\Pr(y_0|x_0=0)\Pr(y_1|x_1=0)+\Pr(y_0|x_0=1)\Pr(y_1|x_1=1)}
{\Pr(y_0|x_0=1)\Pr(y_1|x_1=0)+\Pr(y_0|x_0=0)\Pr(y_1|x_1=1)}\nonumber\\
&=\frac{\varepsilon_0/\overline{\varepsilon_0}\cdot \varepsilon_1/\overline{\varepsilon_1}+1}
{\varepsilon_0/\overline{\varepsilon_0}+\varepsilon_1/\overline{\varepsilon_1}}
=\frac{\,\overline{\varepsilon_0\star\varepsilon_1}\,}
{\varepsilon_0\star\varepsilon_1},
\label{ggg3}\\
&\frac{\Pr(y_0,y_1,u_0=0|u_1=0)}{\Pr(y_0,y_1,u_0=0|u_1=1)}
=\frac{\Pr(y_0|x_0=0)\Pr(y_1|x_1=0)}{\Pr(y_0|x_0=1)\Pr(y_1|x_1=1)}
=\frac{\varepsilon_0\varepsilon_1}{\overline{\varepsilon_0}\,\overline{\varepsilon_1}},\\
&\frac{\Pr(y_0,y_1,u_0=1|u_1=0)}{\Pr(y_0,y_1,u_0=1|u_1=1)}
=\frac{\Pr(y_0|x_0=1)\Pr(y_1|x_1=0)}{\Pr(y_0|x_0=0)\Pr(y_1|x_1=1)}
=\frac{\overline{\varepsilon_0}\varepsilon_1}{\varepsilon_0\overline{\varepsilon_1}},
\end{align}
and
\begin{align}
&\Pr(y_0,y_1)=(\Pr(y_0,y_1|u_0=0)+\Pr(y_0,y_1|u_0=1))/2\nonumber\\
&=(\Pr(y_0|x_0=0)+\Pr(y_0|x_0=1))(\Pr(y_1|x_1=0)+\Pr(y_1|x_1=1))/4\nonumber\\
&=\Pr(y_0)\Pr(y_1),\\
&\Pr(y_0,y_1,u_0=0)=(\Pr(y_0,y_1,u_0=0|u_1=0)+\Pr(y_0,y_1,u_0=0|u_1=1))/2\nonumber\\
&=(\Pr(y_0|x_0=0)\Pr(y_1|x_1=0)+\Pr(y_0|x_0=1)\Pr(y_1|x_1=1))/4\nonumber\\
&=\overline{\varepsilon_0\star\varepsilon_1}\Pr(y_0)\Pr(y_1),\\
&\Pr(y_0,y_1,u_0=1)=(\Pr(y_0,y_1,u_0=1|u_1=0)+\Pr(y_0,y_1,u_0=1|u_1=1))/2\nonumber\\
&=(\Pr(y_0|x_0=1)\Pr(y_1|x_1=0)+\Pr(y_0|x_0=0)\Pr(y_1|x_1=1))/4\nonumber\\
&=(\varepsilon_0\star\varepsilon_1)\Pr(y_0)\Pr(y_1).
\label{ggg4}
\end{align}
Therefore, according to
\begin{gather*}
\sum_{y_0\in L_{W_0}(\varepsilon_0),\, y_1\in L_{W_1}(\varepsilon_1)}\Pr(y_0)\Pr(y_1)
=P_{W_0}(\varepsilon_0)P_{W_1}(\varepsilon_1),
\end{gather*}
we see that (\ref{ggg1}) and (\ref{ggg2}) follow from (\ref{ggg3}) to (\ref{ggg4}).
\end{proof}

The following theorem is a simple corollary of Theorem~\ref{theo50}.
\begin{theorem}\label{lem200}
If $W_0=\sum_{i\in[n]}p_iW'_i$ and $W_1=\sum_{j\in[m]}q_jW''_j$, where $W'_i$ and $W''_j$ are independent BIDMCs for any $i\in[n]$ and $j\in[m]$, then
\begin{align}\label{at00}
A_0(W_0,W_1)\cong\sum_{i\in[n],j\in[m]}p_iq_jA_0(W'_i,W''_j),\\
A_1(W_0,W_1)\cong\sum_{i\in[n],j\in[m]}p_iq_jA_1(W'_i,W''_j).\label{at01}
\end{align}
\end{theorem}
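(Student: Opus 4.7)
My plan is to deduce both equivalences directly from Theorem~\ref{theo50} together with formula (\ref{62'}) of the preceding theorem on LRPs of random switching channels, so that the statement reduces to a routine rearrangement of double sums.

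First, because $W_0=\sum_{i\in[n]}p_iW'_i$ and $W_1=\sum_{j\in[m]}q_jW''_j$, formula (\ref{62'}) yields $P_{W_0}(\varepsilon_0)=\sum_{i\in[n]}p_iP_{W'_i}(\varepsilon_0)$ and $P_{W_1}(\varepsilon_1)=\sum_{j\in[m]}q_jP_{W''_j}(\varepsilon_1)$ for every $\varepsilon_0,\varepsilon_1\in[0,1]$. I would then substitute these expansions into formula (\ref{ggg1}) for $P_{A_0(W_0,W_1)}(\varepsilon)$, distribute the product, and interchange the order of summation to get
\begin{align*}
P_{A_0(W_0,W_1)}(\varepsilon)
=\sum_{i\in[n],j\in[m]}p_iq_j\sum_{\overline{\varepsilon_0\star\varepsilon_1}=\varepsilon}P_{W'_i}(\varepsilon_0)P_{W''_j}(\varepsilon_1).
\end{align*}
Applying (\ref{ggg1}) a second time to each inner sum identifies the right-hand side with $\sum_{i,j}p_iq_jP_{A_0(W'_i,W''_j)}(\varepsilon)$. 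Since $\{p_iq_j\}_{i\in[n],j\in[m]}$ is a probability distribution and the channels $A_0(W'_i,W''_j)$ can be taken pairwise independent (the whole family $\{W'_i\}\cup\{W''_j\}$ is assumed independent, so independent copies are available for each pair), a reverse application of (\ref{62'}) shows this last expression equals the LRP of the RSC $\sum_{i,j}p_iq_jA_0(W'_i,W''_j)$. Equivalence is defined via equality of LRPs, which establishes (\ref{at00}).

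The argument for (\ref{at01}) proceeds identically, with (\ref{ggg1}) replaced by the two-term formula (\ref{ggg2}); the distributive manipulation works separately on each of the two sums in (\ref{ggg2}) with no new idea required. I do not anticipate a serious obstacle: the one subtlety worth being explicit about is the propagation of independence so that the right-hand sides genuinely represent RSCs and (\ref{62'}) may be applied in reverse, and this is immediate from the blanket independence assumption on $\{W'_i\}_{i\in[n]}$ and $\{W''_j\}_{j\in[m]}$.
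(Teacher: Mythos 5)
Your proof is correct and follows exactly the route the paper intends: the paper simply asserts that Theorem~\ref{lem200} ``is a simple corollary of Theorem~\ref{theo50}'' without writing out the details, and your derivation—expanding $P_{W_0}$, $P_{W_1}$ via (\ref{62'}), substituting into (\ref{ggg1}) and (\ref{ggg2}), interchanging sums, and reading off an RSC by (\ref{62'}) in reverse—is precisely that corollary spelled out. Your remark on propagating independence is a sensible clarification but not a deviation from the paper's approach.
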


From $a\star b=b\star a$, $a\star \frac{1}{2}=\frac{1}{2}$, $\overline{\overline{a\star b}\star c}=\overline{a\star \overline{b\star c}}$ and (\ref{ggg1}) we have
\begin{gather}
A_0(W_0,W_1)\cong A_0(W_1,W_0),\label{mm303}\\
  A_0(\mathrm{B}(1/2),W)\cong \mathrm{B}(1/2),\label{mm300} \\
A_0(A_0(W_0,W_1),W_2)\cong A_0(W_0,A_0(W_1,W_2)).\label{mm305}
\end{gather}
However, $A_1(W_0,W_1)$ may be not equivalent to $A_1(W_1,W_0)$.
For example, for $0<a<b<\overline{a}$ and $$\frac{\sqrt{a\overline{b}}}{\sqrt{\overline{a}b}+\sqrt{a\overline{b}}}<\sigma<
\frac{\overline{b}}{\overline{a}+\overline{b}},$$
let $W=A_1(\mathrm{B}(\sigma),\mathrm{B}_{a,b})$ and $W'=A_1(\mathrm{B}_{a,b},\mathrm{B}(\sigma))$, then we have
\begin{gather*}
P_{W}(\varepsilon)=\left\{
\begin{array}{ll}
\frac{a\sigma+b\overline{\sigma}}{2},&\text{if }
\varepsilon=\frac{a\sigma}{a\sigma+b\overline{\sigma}},\\
\frac{a\overline{\sigma}+b\sigma}{2},&\text{if }
\varepsilon=\frac{a\overline{\sigma}}{a\overline{\sigma}+b\sigma},\\
\frac{\overline{a}\sigma+\overline{b}\overline{\sigma}}{2},&\text{if }
\varepsilon=\frac{\overline{a}\sigma}{\overline{a}\sigma+\overline{b}\overline{\sigma}},\\
\frac{\overline{a}\,\overline{\sigma}+\overline{b}\sigma}{2},&\text{if }
\varepsilon=\frac{\overline{a}\,\overline{\sigma}}
{\overline{a}\,\overline{\sigma}+\overline{b}\sigma},\\
0,&\text{otherwise},
\end{array}
\right.\ \
P_{W'}(\varepsilon)=\left\{
\begin{array}{ll}
\frac{a\sigma+b\overline{\sigma}}{4},&\text{if }
\varepsilon=\frac{a\sigma}{a\sigma+b\overline{\sigma}},\\
\frac{a\sigma+b\overline{\sigma}}{4},&\text{if }
\varepsilon=\frac{b\overline{\sigma}}{a\sigma+b\overline{\sigma}},
\\
\frac{a\overline{\sigma}+b\sigma}{4},&\text{if }
\varepsilon=\frac{a\overline{\sigma}}{a\overline{\sigma}+b\sigma},\\
\frac{a\overline{\sigma}+b\sigma}{4},&\text{if }
\varepsilon=\frac{b\sigma}{a\overline{\sigma}+b\sigma},
\\
\frac{\overline{a}\sigma+\overline{b}\overline{\sigma}}{4},&\text{if }
\varepsilon=\frac{\overline{a}\sigma}{\overline{a}\sigma+\overline{b}\overline{\sigma}},\\
\frac{\overline{a}\sigma+\overline{b}\overline{\sigma}}{4},&\text{if }
\varepsilon=\frac{\overline{b}\overline{\sigma}}{\overline{a}\sigma+\overline{b}\overline{\sigma}},
\\
\frac{\overline{a}\,\overline{\sigma}+\overline{b}\sigma}{4},&\text{if }
\varepsilon=\frac{\overline{a}\,\overline{\sigma}}
{\overline{a}\,\overline{\sigma}+\overline{b}\sigma},\\
\frac{\overline{a}\,\overline{\sigma}+\overline{b}\sigma}{4},&\text{if }
\varepsilon=\frac{\overline{b}\sigma}
{\overline{a}\,\overline{\sigma}+\overline{b}\sigma},
\\
0,&\text{otherwise}.
\end{array}
\right.
\end{gather*}
Hence $W'$ is not equivalent to $W$. Indeed, $W'$
is symmetric while $W$ is asymmetric.

\subsection{Arikan Transformations of Symmetric BIDMCs}

In this subsection we deal with the Arikan transformations when the underlying channels are symmetric BIDMCs.
For real numbers $a,b\in[0,1]$, let
\begin{gather}a\diamond b=\left\{\begin{array}{ll}
ab/(\overline{a}\star b),&\text{if }\{a,b\}\subset(0,1),\\
0,&\text{if }\{a,b\}\cap \{0,1\}\neq \emptyset.
\end{array}\right.\end{gather}
One can deduce the following lemma easily.
\begin{lemma}\label{lem201}
For the operation $\diamond$, we have
\begin{gather}
a\diamond b=b\diamond a,\ (a\diamond b)\diamond c=a \diamond(b\diamond c),\label{gg1}\\
 \overline{a\diamond b}=\overline{a}\diamond \overline{b}\text{ for }\{a,b\}\subset (0,1),\label{gg3}\\
a\diamond \overline{a}=\frac{1}{2}\text{ and }a\diamond \frac{1}{2}=a\ \text{for }a\in(0,1).\label{gg4}
\end{gather}
Furthermore, $\langle(0,1),\diamond\rangle$ is an Abel group and
$$
\langle(0,1),\star\rangle,\ \langle[0,1/2],\star\rangle,\ \langle[0,1/2],\diamond\rangle
$$
are commutative semi-groups.
\end{lemma}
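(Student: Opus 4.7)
The plan is to reduce every identity to a manipulation with the explicit formula $a\diamond b = ab/(ab+\overline{a}\,\overline{b})$, which one gets by unfolding $\overline{a}\star b = ab+\overline{a}\,\overline{b}$. Commutativity \eqref{gg1} (first equation) and the complement law \eqref{gg3} are then immediate: $\overline{a\diamond b} = \overline{a}\,\overline{b}/(ab+\overline{a}\,\overline{b})$ has exactly the same form as $\overline{a}\diamond\overline{b}$. The identities \eqref{gg4} are one-line verifications: $a\diamond\overline{a} = a\overline{a}/(2a\overline{a}) = 1/2$ and $a\diamond \tfrac12 = (a/2)/(a/2+\overline{a}/2) = a$.

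For associativity of $\diamond$ I would first compute one nested product: using \eqref{gg3} to get $\overline{a\diamond b}$,
\begin{equation*}
(a\diamond b)\diamond c \;=\; \frac{(a\diamond b)\,c}{(a\diamond b)\,c+\overline{a\diamond b}\,\overline c}\;=\;\frac{abc}{abc+\overline a\,\overline b\,\overline c},
\end{equation*}
after clearing the common denominator $ab+\overline a\,\overline b$. Because the right-hand side is symmetric in $a,b,c$, it equals $a\diamond(b\diamond c)$ as well, which gives \eqref{gg1} (second equation). The analogous symmetric expansion $(a\star b)\star c = \overline a bc + a\overline b c + ab\overline c + \overline a\,\overline b\,\overline c$, already displayed in the excerpt, handles associativity of $\star$.

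Turning to the algebraic structure, $\langle(0,1),\diamond\rangle$ is a group: \eqref{gg4} supplies the identity $1/2$ and the inverse $a^{-1}=\overline a$, while closure is clear since $ab>0$ and $ab+\overline a\,\overline b>0$ when $a,b\in(0,1)$, yielding a value strictly between $0$ and $1$. Commutativity and associativity have already been dispatched. For the three semigroup claims, associativity and commutativity are inherited from the group/semigroup computations above, so only closure needs work. Closure of $\star$ on $(0,1)$ is immediate.

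The only genuine obstacle, and the only step that requires an inequality rather than an identity, is closure on $[0,1/2]$. For $\star$, I would observe that $a\star b\le 1/2$ is equivalent to $(2a-1)(2b-1)\ge 0$, which holds whenever $a,b\le 1/2$. For $\diamond$, if either factor is $0$ the result is $0$, and otherwise $a\diamond b\le 1/2$ is equivalent to $ab\le \overline a\,\overline b$, i.e.\ $\tfrac{a}{\overline a}\cdot\tfrac{b}{\overline b}\le 1$, which again holds on $[0,1/2]$. These inequalities complete the semigroup statements and finish the lemma.
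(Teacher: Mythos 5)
Your verification is correct and is essentially the computation the paper has in mind when it says the lemma is ``deduced easily'': the paper has already displayed the identities $\overline{a}\star b = ab+\overline{a}\,\overline{b}=\overline{a\star b}$ and $(a\star b)\star c = \overline a bc+a\overline b c+ab\overline c+\overline a\,\overline b\,\overline c$ immediately before the lemma, and your argument unfolds $a\diamond b$ into $ab/(ab+\overline a\,\overline b)$ and works directly from there. The symmetric form $(a\diamond b)\diamond c = abc/(abc+\overline a\,\overline b\,\overline c)$ is the right clean way to get associativity, the group axioms for $\langle(0,1),\diamond\rangle$ are checked correctly, and the two closure inequalities on $[0,1/2]$ --- $(2a-1)(2b-1)\ge 0$ for $\star$ and $\tfrac{a}{\overline a}\tfrac{b}{\overline b}\le 1$ for $\diamond$ --- are the crux and are handled properly.

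One tiny point worth tightening: your associativity computation for $\diamond$ is done only on $(0,1)^3$, but the piecewise definition sends $a\diamond b$ to $0$ whenever $\{a,b\}\cap\{0,1\}\ne\emptyset$, and the semigroup $\langle[0,1/2],\diamond\rangle$ includes the endpoint $0$. You should add the one-line observation that if any of $a,b,c$ equals $0$ (the only boundary value in $[0,1/2]$), then both $(a\diamond b)\diamond c$ and $a\diamond(b\diamond c)$ collapse to $0$, since a single factor in $\{0,1\}$ makes every $\diamond$-product containing it vanish. With that remark the lemma is fully proved.
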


When the underlying BIDMCs are BSCs, from Theorem~\ref{theo50} one can deduce easily that the Arikan transformations assume simple forms as presented in the following lemma (c.f. \cite{GR20}, \cite{JCT24}).
\begin{lemma}\label{lem202}
For any $\varepsilon_0,\varepsilon_1\in[0,1]$, we have
\begin{gather}
A_0(\mathrm{B}(\varepsilon_0),\mathrm{B}(\varepsilon_1))
\cong\mathrm{B}({\varepsilon_0}\star\varepsilon_1)\label{at31},\\
A_1(\mathrm{B}(\varepsilon_0),\mathrm{B}(\varepsilon_1))\cong
({\varepsilon_0}\star\overline{\varepsilon_1})
\mathrm{B}({\varepsilon_0}\diamond{\varepsilon_1})
+({\varepsilon_0}\star{\varepsilon_1})
\mathrm{B}({\varepsilon_0}\diamond\overline{\varepsilon_1}).\label{at32}
\end{gather}
\end{lemma}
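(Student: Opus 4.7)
The plan is to establish each equivalence via Theorem~\ref{theo20} by checking that the two sides share the same likelihood ratio profile. As a preliminary observation, $P_{\mathrm{B}(\varepsilon)}$ is supported on $\{\varepsilon,\overline{\varepsilon}\}$ with mass $\tfrac{1}{2}$ at each point (collapsing to a single atom of mass $1$ when $\varepsilon=\tfrac{1}{2}$). Consequently the product $P_{\mathrm{B}(\varepsilon_0)}(\varepsilon_0')P_{\mathrm{B}(\varepsilon_1)}(\varepsilon_1')$ appearing in Theorem~\ref{theo50} equals $\tfrac{1}{4}$ for each of the four pairs $(\varepsilon_0',\varepsilon_1')\in\{\varepsilon_0,\overline{\varepsilon_0}\}\times\{\varepsilon_1,\overline{\varepsilon_1}\}$, so the entire verification reduces to a finite enumeration.

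For (\ref{at31}) I will substitute into (\ref{ggg1}) and invoke the $\star$-identities $a\star b=\overline{a}\star\overline{b}$ and $\overline{a\star b}=\overline{a}\star b$, both immediate from $a\star b=\overline{a}b+a\overline{b}$. They force $\overline{\varepsilon_0'\star\varepsilon_1'}$ to take only the two values $\overline{\varepsilon_0\star\varepsilon_1}$ and $\varepsilon_0\star\varepsilon_1$, each attained by exactly two of the four pairs. The resulting LRP therefore places mass $\tfrac{1}{2}$ at each, which is precisely $P_{\mathrm{B}(\varepsilon_0\star\varepsilon_1)}$.

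For (\ref{at32}) I will compute the LRPs of the two sides separately and match them. Using (\ref{62'}) together with Lemma~\ref{lem201}, the right-hand side yields mass $\tfrac{1}{2}\overline{\varepsilon_0\star\varepsilon_1}$ at each of $\varepsilon_0\diamond\varepsilon_1$ and $\overline{\varepsilon_0\diamond\varepsilon_1}$, and mass $\tfrac{1}{2}(\varepsilon_0\star\varepsilon_1)$ at each of $\varepsilon_0\diamond\overline{\varepsilon_1}$ and $\overline{\varepsilon_0\diamond\overline{\varepsilon_1}}$. For the left-hand side I will feed the four index pairs into (\ref{ggg2}). By the definition of $\diamond$, the argument $\varepsilon_0'\varepsilon_1'/\overline{\varepsilon_0'\star\varepsilon_1'}$ in the first sum equals $\varepsilon_0'\diamond\varepsilon_1'$ and the argument $\overline{\varepsilon_0'}\varepsilon_1'/(\varepsilon_0'\star\varepsilon_1')$ in the second sum equals $\overline{\varepsilon_0'}\diamond\varepsilon_1'$. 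The $\star$-identities rewrite every prefactor as either $\overline{\varepsilon_0\star\varepsilon_1}$ or $\varepsilon_0\star\varepsilon_1$, and the identity $\overline{a\diamond b}=\overline{a}\diamond\overline{b}$ from (\ref{gg3}) collapses the eight arguments into the same four $\diamond$-values above, with matching masses.

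The main obstacle is the bookkeeping in the $A_1$ step: the eight contributions (four from each sum in (\ref{ggg2})) must be grouped into the four atoms of the target LRP with matching weights. This is conceptually routine once the $\star$- and $\diamond$-identities are applied systematically, but the notation is dense. Degenerate boundary cases ($\varepsilon_i\in\{0,\tfrac{1}{2},1\}$) can be checked directly from (\ref{at06}) and (\ref{at07}), or absorbed by the convention in the definition of $\diamond$ that assigns the value $0$ on the boundary.
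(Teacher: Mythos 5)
Your proposal is correct and follows exactly the route the paper intends: the paper states Lemma~\ref{lem202} as an easy consequence of Theorem~\ref{theo50} without writing out the details, and your derivation—apply (\ref{ggg1})--(\ref{ggg2}) to the two-atom LRP of a BSC, use the $\star$- and $\diamond$-identities to collapse the four (respectively eight) contributions, and match against (\ref{62'}) via Theorem~\ref{theo20}—is precisely that deduction. I checked the $A_1$ bookkeeping: the eight terms indeed group into the four atoms $\varepsilon_0\diamond\varepsilon_1$, $\overline{\varepsilon_0\diamond\varepsilon_1}$, $\varepsilon_0\diamond\overline{\varepsilon_1}$, $\overline{\varepsilon_0\diamond\overline{\varepsilon_1}}$ with masses $\tfrac12\overline{\varepsilon_0\star\varepsilon_1}$, $\tfrac12\overline{\varepsilon_0\star\varepsilon_1}$, $\tfrac12(\varepsilon_0\star\varepsilon_1)$, $\tfrac12(\varepsilon_0\star\varepsilon_1)$, as you claim, and the boundary cases reconcile with the $\diamond$-convention.
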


Furthermore, for the Arikan transformations whose underlying BIDMCs are symmetric, we have the following lemma.
\begin{lemma}\label{lem301}
For symmetric BIDMCs $W$, $W'$ and $W''$, we have
\begin{gather}
  A_1(\mathrm{B}(1/2),W)\cong
  A_0(\mathrm{B}(0),W)\cong W, \label{mm301}\\
  A_1(\mathrm{B}(0),W)\cong \mathrm{B}(0),\label{mm302}\\
A_1(W,W')\cong A_1(W',W),\label{mm304}\\
A_1(A_1(W,W'),W'')\cong A_1(W,A_1(W',W'')).\label{mm306}
\end{gather}
In particular, the BIDMCs $A_0(W,W')$ and $A_1(W,W')$ are symmetric.
\end{lemma}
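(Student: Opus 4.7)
My plan is to reduce every claim to the BSC case via Theorem~\ref{lem200}. Since every symmetric BIDMC is an equivalent of an RSC of BSCs, I would write $W\cong\sum_i p_i\mathrm{B}(\alpha_i)$, $W'\cong\sum_j q_j\mathrm{B}(\beta_j)$, $W''\cong\sum_k r_k\mathrm{B}(\gamma_k)$, and then distribute the Arikan operations across these decompositions using (\ref{at00})--(\ref{at01}). After this reduction each identity becomes an algebraic statement about the BSC parameters via the explicit formulas (\ref{at31})--(\ref{at32}), together with the $\star/\diamond$ identities of Lemma~\ref{lem201}.

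For (\ref{mm301}) and (\ref{mm302}) the reduction is immediate: $0\star\beta=\beta$ yields $A_0(\mathrm{B}(0),\mathrm{B}(\beta_j))\cong\mathrm{B}(\beta_j)$; (\ref{gg4}) yields $A_1(\mathrm{B}(1/2),\mathrm{B}(\beta_j))\cong\tfrac{1}{2}\mathrm{B}(\beta_j)+\tfrac{1}{2}\mathrm{B}(\overline{\beta_j})\cong\mathrm{B}(\beta_j)$, where the last equivalence uses $\mathrm{B}(\beta)\cong\mathrm{B}(\overline{\beta})$; and the definition of $\diamond$ forces $A_1(\mathrm{B}(0),\mathrm{B}(\beta_j))\cong\overline{\beta_j}\mathrm{B}(0)+\beta_j\mathrm{B}(0)\cong\mathrm{B}(0)$. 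The symmetry of $A_0(W,W')$ and $A_1(W,W')$ is simultaneous: by (\ref{at31}) and (\ref{at32}) each $A_0(\mathrm{B}(\alpha_i),\mathrm{B}(\beta_j))$ is a BSC and each $A_1(\mathrm{B}(\alpha_i),\mathrm{B}(\beta_j))$ is an RSC of two BSCs, so after reassembly via (\ref{at00})--(\ref{at01}) both outputs are RSCs of BSCs and hence lie in some $\mathbb{B}_N$. The commutativity (\ref{mm304}) reduces, termwise, to equating $(\alpha\star\overline{\beta})\mathrm{B}(\alpha\diamond\beta)+(\alpha\star\beta)\mathrm{B}(\alpha\diamond\overline{\beta})$ with $(\beta\star\overline{\alpha})\mathrm{B}(\beta\diamond\alpha)+(\beta\star\alpha)\mathrm{B}(\beta\diamond\overline{\alpha})$, which follows from commutativity of $\star$ and $\diamond$ and the fact that $\alpha\star\overline{\beta}=\beta\star\overline{\alpha}=\alpha\beta+\overline{\alpha}\,\overline{\beta}$.

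The only nontrivial step, and where I expect the book-keeping to be heaviest, is the associativity (\ref{mm306}). I would use the just-established symmetry of $A_1(W,W')$ and $A_1(W',W'')$ to re-decompose each as an RSC of BSCs, apply (\ref{at01}) a second time, and obtain on both sides a triple sum indexed by $(i,j,k)$ of four-term RSCs of BSCs. By associativity of $\diamond$ in Lemma~\ref{lem201}, the four BSC parameters on both sides agree up to the involution $\beta\mapsto\overline{\beta}$, which is absorbed by the BSC symmetry $\mathrm{B}(\beta)\cong\mathrm{B}(\overline{\beta})$. The remaining task is to match the four corresponding $\star$-built coefficients, a purely algebraic verification resting on $\overline{a}\star b=\overline{a\star b}$ and the associativity of $\star$ recorded just before Theorem~\ref{theo50}. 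Once this matching is complete, (\ref{c00}) reassembles the BSC-level equality into the desired equivalence of compound channels.
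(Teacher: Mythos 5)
Your proposal takes essentially the same route as the paper: reduce to BSC underlying channels via Theorem~\ref{lem200} and the distributivity (\ref{at00})--(\ref{at01}), settle (\ref{mm301})--(\ref{mm304}) and symmetry termwise using Lemmas~\ref{lem201}--\ref{lem202}, and prove associativity (\ref{mm306}) by re-expanding the inner $A_1$ as an RSC of two BSCs and applying (\ref{at01}) again, then matching coefficients. Two small points you should make explicit: the $\diamond$-conjugation identity (\ref{gg3}) is stated only on $(0,1)$, so the $\mathrm{B}(0)$ case of (\ref{mm306}) must be dispatched separately (as the paper does, via (\ref{mm302}) and (\ref{mm304})); and the coefficient-matching for (\ref{mm306}) ultimately rests on the mixed identity $(\varepsilon\star\overline{\sigma})(\varepsilon\diamond\sigma)=\varepsilon\sigma$ together with $\overline{\varepsilon\diamond\sigma}=\overline{\varepsilon}\diamond\overline{\sigma}$, not only on the $\star$-identities you list.
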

\begin{proof}
From Lemmas~\ref{lem201} and \ref{lem202}, we see easily that (\ref{mm301}) to (\ref{mm304}) are valid when $W,W',W''$ are BSCs.
According to (\ref{mm302}) we see that (\ref{mm306}) is valid when at least one channel in $\{W,W',W''\}$
is $\mathrm{B}(0)$.
Furthermore, for $\varepsilon,\sigma,\delta\in(0,1)$, from $({\varepsilon}\star\overline{\sigma})({\varepsilon}\diamond{\sigma})
={\varepsilon}{\sigma}$, ${\varepsilon}\diamond\overline{\sigma}\diamond\overline{\delta}=
\overline{\overline{\varepsilon}\diamond{\sigma}\diamond\delta}$, (\ref{at01}) and (\ref{at32}) we have
\begin{align*}
&A_1(A_1(\mathrm{B}(\varepsilon),\mathrm{B}(\sigma)),\mathrm{B}(\delta))\\
\cong&A_1\big(({\varepsilon}\star\overline{\sigma})
\mathrm{B}({\varepsilon}\diamond{\sigma})
+({\varepsilon}\star{\sigma})
\mathrm{B}({\varepsilon}\diamond\overline{\sigma}),\mathrm{B}(\delta)\big)\\
\cong&({\varepsilon}\star\overline{\sigma})\big((({\varepsilon}\diamond{\sigma})\star\overline{\delta})
\mathrm{B}(({\varepsilon}\diamond{\sigma})\diamond{\delta})
+(({\varepsilon}\diamond{\sigma})\star{\delta})
\mathrm{B}((\varepsilon\diamond{\sigma})\diamond\overline{\delta})\big)+\\
&\ \ \ ({\varepsilon}\star{\sigma})\big((({{\varepsilon}}\diamond\overline{\sigma})\star\overline{\delta})
\mathrm{B}(({{\varepsilon}}\diamond\overline{\sigma})\diamond{\delta})
+(({{\varepsilon}}\diamond\overline{\sigma})\star{\delta})
\mathrm{B}(({\varepsilon}\diamond\overline{\sigma})\diamond\overline{\delta})\big)\\
\cong&(\varepsilon\sigma\delta+\overline{\varepsilon}\,\overline{\sigma}\overline{\delta})
\mathrm{B}(\varepsilon\diamond\sigma\diamond\delta)+
(\varepsilon\sigma\overline{\delta}+\overline{\varepsilon}\,\overline{\sigma}{\delta})
\mathrm{B}(\varepsilon\diamond\sigma\diamond\overline{\delta})+\\
&\ \ \ ({\varepsilon}\overline{\sigma}\delta+\overline{\varepsilon}{\sigma}\overline{\delta})
\mathrm{B}({\varepsilon}\diamond\overline{\sigma}\diamond\delta)+
(\varepsilon\overline{\sigma}\overline{\delta}+\overline{\varepsilon}{\sigma}{\delta})
\mathrm{B}(\overline{\varepsilon}\diamond{\sigma}\diamond\delta)
\end{align*}
and thus we see that (\ref{mm306}) is valid when $W,W',W''$ are BSCs.
Hence, from Theorem \ref{lem200} we see that (\ref{mm301}) to (\ref{mm306}) are valid for any symmetric BIDMCs $W,W',W''$.
Clearly, the BIDMCs $A_0(W,W')$ and $A_1(W,W')$ are symmetric since they are equivalent to some RSCs of BSCs.
\end{proof}

For independent and symmetric BIDMCs $W_0,W_1,\ldots$, we define further $\Delta(W_0)=\nabla(W_0)=W_0$
and, for $t\geq 1$,
\begin{gather}
\Delta(\{W_i\}_{i\in[t+1]})=A_0(\Delta(\{W_i\}_{i\in[t]}),W_{t}),\label{u01}\\
\nabla(\{W_i\}_{i\in[t+1]})=A_1(\nabla(\{W_i\}_{i\in[t]}),W_{t}).\label{u02}
\end{gather}
Clearly,
for $1\leq j<t$ we have
\begin{gather}
\Delta(\{W_i\}_{i\in[t]})\cong A_0(\Delta(\{W_i\}_{i\in[j]}),\Delta(\{W_{i+j}\}_{i\in[t-j]})),\label{u05}\\
\nabla(\{W_i\}_{i\in[t]})\cong A_1(\nabla(\{W_i\}_{i\in[j]}),\nabla(\{W_{i+j}\}_{i\in[t-j]}),\label{u06}
\end{gather}
and for any permutation $\{W'_i\}_{i\in[t]}$ of the channels $\{W_i\}_{i\in[t]}$ we have
\begin{gather}
\Delta(\{W'_i\}_{i\in[t]})\cong \Delta(\{W_i\}_{i\in[t]}),\label{u03}\\
\nabla(\{W'_i\}_{i\in[t]})\cong \nabla(\{W_i\}_{i\in[t]}).\label{u04}
\end{gather}

If no confusion, the vector $(x_0,x_1,\ldots,x_{n-1})$ is also denoted by $x_0^n$.
For any vector $\sigma_0^t\in[0,1]^t$, let
\begin{align}
\varpi(\sigma_0^t)=\left\{
\begin{array}{ll}
\frac{1}{2}\big(\prod_{i\in [t]}\sigma_i+\prod_{i\in[t]}\overline{\sigma_i}\big),
& \text{if }\sigma_0^t\in(0,1)^t,
\\
2^{-t},
& \text{if }\sigma_0^t\not\in(0,1)^t.
\end{array}\right.\label{u07}
\end{align}
For $\sigma_0^{t+1}\in(0,1)^{t+1}$, from the definition of the operation $\diamond$, by induction one can prove easily
\begin{gather*}
\diamond_{i\in[t]}\sigma_{i}=\frac{\prod_{i\in[t]}\sigma_{i}}
{\prod_{i\in[t]}\sigma_{i}+\prod_{i\in[t]}\overline{\sigma_{i}}}
=\frac{\prod_{i\in[t]}\sigma_{i}}{2\varpi(\sigma_0^t)},
\end{gather*}
and then from $\overline{\diamond_{i\in[t]}\sigma_{i}}=\diamond_{i\in[t]}\overline{\sigma_{i}}$ we see
\begin{gather}
\varpi(\sigma_0^t)\big((\diamond_{i\in[t]}\sigma_{i})\star\overline{\sigma_t}\big)=
\varpi(\sigma_0^t)\bigg(\frac{\prod_{i\in[t+1]}\sigma_{i}}{2\varpi(\sigma_0^t)}
+\frac{\prod_{i\in[t+1]}\overline{\sigma_{i}}}{2\varpi(\sigma_0^t)}\bigg)
=\varpi(\sigma_0^{t+1}).\label{u10}
\end{gather}
\begin{lemma}
\label{lem700}
For $t\geq 1$ and $\varepsilon_0^t\in[0,1]^t$,
\begin{gather}
\Delta(\{\mathrm{B}(\varepsilon_i)\}_{i\in[t]})
\cong\mathrm{B}(\star_{i\in[t]}\varepsilon_i),\label{u08}\\
\nabla(\{\mathrm{B}(\varepsilon_i)\}_{i\in[t]})
\cong\sum_{\sigma_i\in\{\varepsilon_i,\overline{\varepsilon_i}\},i\in[t]}\varpi(\sigma_0^t)
\mathrm{B}(\diamond_{i\in[t]}\sigma_i),\label{u09}
\end{gather}
where $\varepsilon$ and $\overline{\varepsilon}$ should be seen as different elements even if $\varepsilon=1/2$.
\end{lemma}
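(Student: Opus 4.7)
The plan is to prove both claims simultaneously by induction on $t$. For the base case $t = 1$, the definitions give $\Delta(\mathrm{B}(\varepsilon_0)) = \nabla(\mathrm{B}(\varepsilon_0)) = \mathrm{B}(\varepsilon_0)$. For (\ref{u08}) this immediately matches the right-hand side; for (\ref{u09}) I will use $\varpi(\varepsilon_0) = \varpi(\overline{\varepsilon_0}) = 1/2$ together with $\mathrm{B}(\varepsilon_0) \cong \mathrm{B}(\overline{\varepsilon_0})$ to collapse the two-term RSC into a single BSC via (\ref{c00}).

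For the inductive step of (\ref{u08}), I would unfold (\ref{u01}), apply the induction hypothesis, and invoke (\ref{at31}): $\Delta(\{\mathrm{B}(\varepsilon_i)\}_{i\in[t+1]}) = A_0(\Delta(\{\mathrm{B}(\varepsilon_i)\}_{i\in[t]}), \mathrm{B}(\varepsilon_t)) \cong A_0(\mathrm{B}(\star_{i\in[t]}\varepsilon_i), \mathrm{B}(\varepsilon_t)) \cong \mathrm{B}(\star_{i\in[t+1]}\varepsilon_i)$, where the last equivalence uses the associativity of $\star$ noted just before Theorem \ref{theo50}.

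For the inductive step of (\ref{u09}), I would similarly unfold (\ref{u02}), substitute the induction hypothesis, and then apply Theorem \ref{lem200} to distribute $A_1$ across the RSC, obtaining $\nabla(\{\mathrm{B}(\varepsilon_i)\}_{i\in[t+1]}) \cong \sum_{\sigma_0^t} \varpi(\sigma_0^t)\, A_1(\mathrm{B}(\diamond_{i\in[t]}\sigma_i), \mathrm{B}(\varepsilon_t))$. Next, (\ref{at32}) expands each $A_1$ term into two BSCs indexed by an auxiliary choice $\sigma_t \in \{\varepsilon_t, \overline{\varepsilon_t}\}$, with BSC parameter $\diamond_{i\in[t+1]}\sigma_i$ and coefficient $(\diamond_{i\in[t]}\sigma_i) \star \overline{\sigma_t}$. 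Reindexing the resulting double sum as a single sum over $\sigma_0^{t+1}$ and invoking (\ref{u10}) absorbs the product $\varpi(\sigma_0^t)((\diamond_{i\in[t]}\sigma_i) \star \overline{\sigma_t})$ into $\varpi(\sigma_0^{t+1})$, yielding exactly the right-hand side of (\ref{u09}).

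The main obstacle I anticipate lies in the boundary cases where some $\varepsilon_i \in \{0,1\}$: there $\diamond$ collapses to $0$ by convention, and (\ref{u10}) was effectively derived under $\sigma_0^{t+1} \in (0,1)^{t+1}$, so the identification step breaks down. I plan to dispose of these cases separately, by using the permutation invariance (\ref{u04}) and the absorption property (\ref{mm302}) to force $\nabla(\{\mathrm{B}(\varepsilon_i)\}_{i\in[t]}) \cong \mathrm{B}(0)$ whenever some $\varepsilon_j \in \{0,1\}$. On the right-hand side of (\ref{u09}), the $2^{-t}$ branch of $\varpi$ in (\ref{u07}) combined with $\diamond_{i\in[t]}\sigma_i = 0$ yields $2^t$ copies of $2^{-t}\mathrm{B}(0)$, which aggregate via (\ref{c00}) into $\mathrm{B}(0)$, matching the left-hand side.
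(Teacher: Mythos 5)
Your proposal is correct and follows essentially the same route as the paper's own proof: (\ref{u08}) by unfolding (\ref{u01}) and applying (\ref{at31}) with associativity of $\star$; (\ref{u09}) by induction, unfolding (\ref{u02}), distributing $A_1$ across the RSC via Theorem~\ref{lem200}/(\ref{at01}), expanding each $A_1(\mathrm{B}(\diamond_{i\in[t]}\sigma_i),\mathrm{B}(\varepsilon_t))$ with (\ref{at32}), and re-normalizing the coefficients via (\ref{u10}). The only difference is that the paper dismisses the degenerate case $\varepsilon_0^t\not\in(0,1)^t$ as obvious, whereas you spell it out via (\ref{u04}) and (\ref{mm302}) on the left and the $2^{-t}$ branch of $\varpi$ on the right; that is a harmless and sensible elaboration.
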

\begin{proof}
Clearly, (\ref{u08}) follows from (\ref{at31}) and (\ref{u01}).
It is obvious that (\ref{u09}) is valid for $t=1$ or $\varepsilon_0^t\not\in(0,1)^t$.
Now we manage to prove (\ref{u09}) for $\varepsilon_0^t\in(0,1)^t$ by induction on $t$.
Indeed, according to (\ref{at01}), (\ref{at32}), (\ref{u02}) and (\ref{u10}) we have
\begin{align*}
&\nabla(\{\mathrm{B}(\varepsilon_i)\}_{i\in[t+1]})
=A_1(\nabla(\{\mathrm{B}(\varepsilon_i)\}_{i\in[t]}),\mathrm{B}(\varepsilon_t))\\
\cong&\sum_{\sigma_i\in\{\varepsilon_i,\overline{\varepsilon_i}\},i\in[t]}\varpi(\sigma_0^t)
A_1(\mathrm{B}(\diamond_{i\in[t]}\sigma_i),\mathrm{B}(\varepsilon_t))\\
\cong&\sum_{\sigma_i\in\{\varepsilon_i,\overline{\varepsilon_i}\},i\in[t]}\varpi(\sigma_0^t)
\sum_{\sigma_t\in\{\varepsilon_t,\overline{\varepsilon_t}\}}
((\diamond_{i\in[t]}\sigma_i)\star\overline{\sigma_t})\mathrm{B}(\diamond_{i\in[t+1]}\sigma_i)\\
\cong&\sum_{\sigma_i\in\{\varepsilon_i,\overline{\varepsilon_i}\},i\in[t+1]}\varpi(\sigma_0^{t+1})
\mathrm{B}(\diamond_{i\in[t+1]}\sigma_i).
\end{align*}

The proof is complete.
\end{proof}

The following theorem is a simple corollary of Theorem~\ref{lem200} and Lemma~\ref{lem700}.

\begin{theorem}\label{cor501'}
Let $\{W_l\}_{l\in [m]}$ be independent and symmetric BIDMCs.
Suppose $W_l\cong\sum_{i\in[n_l]}q_{i,l}\mathrm{B}(\varepsilon_{i,l})$ for $l\in[m]$, then we have
\begin{gather}
\Delta(\{W_l\}_{l\in[m]})\cong
\sum_{\{i_l\in[n_l]\}_{l\in[m]}}\bigg(\prod_{l\in[m]}q_{i_l,l}\bigg)
\mathrm{B}(\star_{l\in[m]}\varepsilon_{i_l,l}),
\label{mm01}\\
\nabla(\{W_l\}_{l\in[m]})\cong\sum_{\{i_l\in[n_l],\sigma_{l}\in
\{\varepsilon_{i_l},\overline{\varepsilon_{i_l}}\}\}_{l\in[m]}}
\bigg(\prod_{l\in [m]}q_{i_l,l}\bigg)\varpi(\sigma_0^m)\mathrm{B}(\diamond_{l\in[m]}\sigma_{l}),\label{mm02}
\end{gather}
where $\varepsilon$ and $\overline{\varepsilon}$ should be seen as different elements even if $\varepsilon=1/2$.
\end{theorem}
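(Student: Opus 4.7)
The statement is a common generalization of Theorem~\ref{lem200} (which distributes $A_0$ and $A_1$ over RSC decompositions of the two underlying channels) and Lemma~\ref{lem700} (which evaluates $\Delta$ and $\nabla$ when every underlying channel is a BSC), so the plan is to reduce to exactly these two ingredients by a direct induction on $m$. The base case $m=1$ is immediate, since $\Delta(W_0)=\nabla(W_0)=W_0\cong\sum_{i\in[n_0]}q_{i,0}\mathrm{B}(\varepsilon_{i,0})$, and both right-hand sides of (\ref{mm01})--(\ref{mm02}) collapse to this decomposition (the $\nabla$ case using $\varpi(\sigma)=1/2$ together with $\mathrm{B}(\varepsilon)\cong\mathrm{B}(\overline{\varepsilon})$). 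For the inductive step, using (\ref{u01})--(\ref{u02}) I write $\Delta(\{W_l\}_{l\in[m+1]})=A_0(\Delta(\{W_l\}_{l\in[m]}),W_m)$ and $\nabla(\{W_l\}_{l\in[m+1]})=A_1(\nabla(\{W_l\}_{l\in[m]}),W_m)$, apply the inductive hypothesis to the first argument, keep $W_m\cong\sum_{i_m}q_{i_m,m}\mathrm{B}(\varepsilon_{i_m,m})$, and invoke Theorem~\ref{lem200} to distribute the outer Arikan transformation over both RSC decompositions. This reduces each summand to an Arikan transformation of a single pair of BSCs, to which Lemma~\ref{lem202} applies.

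The $\Delta$ case is then quick: by (\ref{at31}) each summand becomes $A_0(\mathrm{B}(\star_{l\in[m]}\varepsilon_{i_l,l}),\mathrm{B}(\varepsilon_{i_m,m}))\cong\mathrm{B}(\star_{l\in[m+1]}\varepsilon_{i_l,l})$ by associativity of $\star$, the weights multiply to $\prod_{l\in[m+1]}q_{i_l,l}$, and (\ref{mm01}) at length $m+1$ drops out.

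The $\nabla$ case is where the main obstacle lies. Under the inductive hypothesis the first argument is a sum indexed not only by tuples $(i_l)_{l\in[m]}$ but also by sign choices $\sigma_l\in\{\varepsilon_{i_l,l},\overline{\varepsilon_{i_l,l}}\}$, with weight $\varpi(\sigma_0^m)\prod_{l\in[m]}q_{i_l,l}$ and inner crossover $\diamond_{l\in[m]}\sigma_l$. Applying Theorem~\ref{lem200} and then (\ref{at32}) against $\mathrm{B}(\varepsilon_{i_m,m})$ splits each such summand into two terms, indexed by $\sigma_m\in\{\varepsilon_{i_m,m},\overline{\varepsilon_{i_m,m}}\}$, each carrying the factor $(\diamond_{l\in[m]}\sigma_l)\star\overline{\sigma_m}$ and an inner BSC $\mathrm{B}((\diamond_{l\in[m]}\sigma_l)\diamond\sigma_m)$. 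Two algebraic facts then close the induction: the associativity of $\diamond$ from Lemma~\ref{lem201}, which identifies the inner crossover with $\diamond_{l\in[m+1]}\sigma_l$; and the identity (\ref{u10}), $\varpi(\sigma_0^m)\bigl((\diamond_{l\in[m]}\sigma_l)\star\overline{\sigma_m}\bigr)=\varpi(\sigma_0^{m+1})$, which absorbs the new split factor into the correct weight at length $m+1$. The only bookkeeping care needed is to treat $\varepsilon$ and $\overline{\varepsilon}$ as distinct formal labels even when $\varepsilon=1/2$, exactly as stipulated in the statement; this convention is already compatible with the index set of (\ref{mm02}) and with the $t=1$ reduction used in the base case.
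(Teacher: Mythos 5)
Your proof is correct and follows essentially the route the paper intends---the paper itself presents this theorem as ``a simple corollary of Theorem~\ref{lem200} and Lemma~\ref{lem700}.'' The only (cosmetic) difference is that, rather than first distributing via Theorem~\ref{lem200} and then invoking Lemma~\ref{lem700} as a black box to evaluate each $\Delta$/$\nabla$ of pure BSCs, you fold Lemma~\ref{lem700}'s own inductive step (via Lemma~\ref{lem202}, the identity (\ref{u10}), and the associativity in Lemma~\ref{lem201}) directly into your induction on $m$ together with the RSC distribution.
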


Notice that the synthetic BIDMCs $\Delta(\{W_l\}_{l\in[m]})$ and $\nabla(\{W_l\}_{l\in[m]})$ are equivalent to the parity-constrained-input parallel channel and the parallel broadcast channel discussed in \cite{Shamai05}, respectively, when the underlying channels $\{W_l\}_{l\in[m]}$ are symmetric BIDMCs.

\section{Synthetic Channels in Polar Codes}\label{sec04}

In this section, we mainly consider to express the synthetic channels, which are generated in the polar codes proposed by Arikan in
\cite{Arikan09}, as RSCs of only a few BSCs when the underlying channels are symmetric BIDMCs.

\subsection{Construction of Polar Codes}
At first, we give a brief introduction for the polar codes from a new angle of view.
The synthetic channels in
the polar code of order $k\geq 2$ are Arikan transformations generated iteratively from underlying channels $\{W^{(0)}_{\alpha}\}_{\alpha\in \{0,1\}^k}$ which are independent copies of a given BIDMC $W$.
For any $0\leq m\leq k-2$, $a,b\in \{0,1\}$, $\beta\in \{0,1\}^{m}$ and $\alpha\in \{0,1\}^{k-m-2}$, let
\begin{align}
W^{(m+1)}_{\alpha\beta a b}=A_a(W^{(m)}_{\alpha b\beta0},W^{(m)}_{\alpha b\beta1}).
\end{align}
For any $\beta\in \{0,1\}^{k-1}$ and $a\in \{0,1\}$, let
\begin{align}
W^{(k)}_{\beta a}=A_a(W^{(k-1)}_{\beta0},W^{(k-1)}_{\beta1}).
\end{align}
Notice that, for any $0\leq m\leq k-1$ and $\beta\in\{0,1\}^m$, the channels in
$$\{W^{(m)}_{\alpha\beta a}:\alpha\in\{0,1\}^{k-1-m},a\in\{0,1\}\}$$ are independent and equivalent to each other.

If the tuples $\alpha=(d_{0},\ldots,d_{k-1})\in \{0,1\}^k$ are ordered according to the ascending order of $$b(\alpha)=\sum_{i\in [k]}d_i2^{k-1-i},$$
the synthetic channels generated in polar code of order $k=4$ are shown in the Figure~5, where $A_4$ and $A_8$ denote transformations with structures similar to that shown in the left sides respectively.

\setlength{\unitlength}{0.3cm}
\newcounter{dc0}
\newcounter{dc4}
\begin{figure}[t]
\begin{center}
\ \ \ \ \ \ \begin{picture}(33,24)
\addtocounter{dc0}{-1}
\multiput(-1,21.9)(2,0){16}
{\addtocounter{dc0}{1}\makebox(0,0)[bl]{{\tiny $W^{(0)}_{\arabic{dc0}}$}}}
\addtocounter{dc4}{-1}
\multiput(-1,0.4)(2,0){16}
{\addtocounter{dc4}{1}\makebox(0,0)[bl]{{\tiny $W^{(4)}_{\arabic{dc4}}$}}}

\multiput(0,23.5)(2,0){16}{\vector(0,-1){2.5}}

\multiput(0,20)(4,0){2}{\framebox(2,1)[]{{\tiny $A$}}}
\multiput(0,16)(4,0){2}{\framebox(2,1)[]{{\tiny $A$}}}
\put(0,20){\vector(0,-1){3}}
\put(2,20){\vector(2,-3){2}}
\put(4,20){\vector(-2,-3){2}}
\put(6,20){\vector(0,-1){3}}

\multiput(0,10)(4,0){4}{\framebox(2,1)[]{{\tiny $A$}}}
\put(0,16){\vector(0,-1){5}}
\put(2,16){\vector(2,-5){2}}
\put(4,16){\vector(4,-5){4}}
\put(6,16){\vector(6,-5){6}}
\put(8,16){\vector(-6,-5){6}}
\put(10,16){\vector(-4,-5){4}}
\put(12,16){\vector(-2,-5){2}}
\put(14,16){\vector(0,-1){5}}
\put(8,16){\framebox(6,5){$A_4$}}

\multiput(0,2)(4,0){8}{\framebox(2,1)[]{{\tiny $A$}}}
\put(0,10){\vector(0,-1){7}}
\put(2,10){\vector(2,-7){2}}
\put(4,10){\vector(4,-7){4}}
\put(6,10){\vector(6,-7){6}}
\put(8,10){\vector(8,-7){8}}
\put(10,10){\vector(10,-7){10}}
\put(12,10){\vector(12,-7){12}}
\put(14,10){\vector(14,-7){14}}
\put(16,10){\vector(-14,-7){14}}
\put(18,10){\vector(-12,-7){12}}
\put(20,10){\vector(-10,-7){10}}
\put(22,10){\vector(-8,-7){8}}
\put(24,10){\vector(-6,-7){6}}
\put(26,10){\vector(-4,-7){4}}
\put(28,10){\vector(-2,-7){2}}
\put(30,10){\vector(0,-7){7}}
\put(16,10){\framebox(14,11){$A_8$}}
\multiput(0,2)(2,0){16}{\vector(0,-1){2.5}}

\end{picture}\end{center}
\caption{Synthetic channels in polar code of order $k=4$.}
\end{figure}
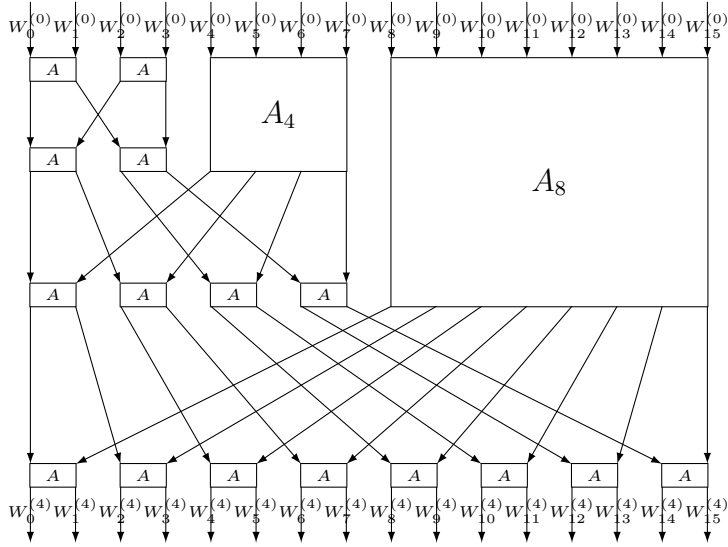

For $W_0\cong W_1\cong W$, we also write $A_0(W_0,W_1)$ and $A_1(W_0,W_1)$ as $A_0(W)$ and $A_1(W)$, respectively. For $n\geq 1$ and $\alpha\in\{0,1\}^n$, we define further $A_{\alpha 0}(W)=A_0(A_{\alpha}(W))$ and $A_{\alpha 1}(W)=A_1(A_{\alpha}(W))$ iteratively.
The following lemma is from \cite{Arikan09}.
\begin{lemma}
If the underlying channels $\{W^{(0)}_{\alpha}\}_{\alpha\in \{0,1\}^k}$ are copies of a given BIDMC $W$, then the set $\{W^{(k)}_{\alpha}:\alpha\in \{0,1\}^k\}$ of the synthetic channels is just the set $\{A_{\alpha }(W):\alpha\in \{0,1\}^k\}$
which shall exhibit polarization phenomena when $k$ is large: The capacity of the part with a proportion close to $I(W)$ is approximately equal to 1, and meanwhile the capacity of the part with a proportion close to $1-I(W)$ is approximately equal to 0.
\end{lemma}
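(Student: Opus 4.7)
The statement combines a structural identification of the synthetic channels with Arikan's polarization phenomenon, and I would address the two claims separately.

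For the structural identification, I would induct on the level $m$ to show that every channel $W^{(m)}_\gamma$ is equivalent to $A_{\gamma'}(W)$ for some $\gamma' \in \{0,1\}^m$ determined by the butterfly/bit-reversal permutation implicit in the recursion $W^{(m+1)}_{\alpha\beta ab} = A_a(W^{(m)}_{\alpha b\beta 0}, W^{(m)}_{\alpha b\beta 1})$. The base case $m = 0$ is trivial since each $W^{(0)}_\alpha$ is a copy of $W$. For the inductive step, the observation stated just before the lemma — that the channels $\{W^{(m)}_{\alpha \beta a}\}_{\alpha, a}$ are independent and mutually equivalent once $\beta$ and $a$ are fixed — is the key enabler: it guarantees that the two inputs to the combiner $A_a$ in the recursion are independent copies of a single equivalence class, which by the inductive hypothesis is of the form $A_{\gamma'}(W)$, so that $W^{(m+1)}_{\alpha\beta ab} \cong A_a(A_{\gamma'}(W)) = A_{\gamma' a}(W)$. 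Tracking the bit-reversal bookkeeping up to $m = k$ shows that as $\gamma$ ranges over $\{0,1\}^k$, the suffix $\gamma'$ ranges exactly over $\{0,1\}^k$, yielding the set equality $\{W^{(k)}_\alpha\}_\alpha = \{A_\alpha(W)\}_\alpha$.

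For the polarization claim, I would reproduce Arikan's martingale argument. Introduce i.i.d.~uniform bits $B_1, B_2, \ldots$ on $\{0,1\}$ and set $W_n = A_{B_1 \cdots B_n}(W)$. The capacity-preservation identity $I(A_0(W)) + I(A_1(W)) = 2 I(W)$, a chain-rule calculation directly from the definitions of $A_0$ and $A_1$, makes $I_n := I(W_n)$ a bounded $[0,1]$-valued martingale; the Bhattacharyya relations $Z(A_1(W)) = Z(W)^2$ and $Z(A_0(W)) \leq 2 Z(W) - Z(W)^2$ make $Z_n := Z(W_n)$ a bounded supermartingale. Martingale convergence yields a.s.~limits $I_\infty$ and $Z_\infty$, and combining $Z_{n+1} = Z_n^2$ on the event $B_{n+1} = 1$ with $Z_{n+1} - Z_n \to 0$ a.s.~forces $Z_\infty \in \{0, 1\}$ almost surely. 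The standard comparison inequalities between $I$ and $Z$ then force $I_\infty \in \{0, 1\}$ a.s. Since $\mathbf{E}(I_\infty) = I_0 = I(W)$, one obtains $\Pr(I_\infty = 1) = I(W)$, which, transported back to the deterministic ensemble $\{A_\alpha(W)\}_{\alpha \in \{0,1\}^k}$ by counting strings uniformly, is exactly the proportion statement in the lemma.

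The main obstacle will be in the second half: the martingale setup itself is a routine calculation from the identities already at our disposal, but the step passing from the $[0,1]$-valued limit $I_\infty$ to the dichotomy $I_\infty \in \{0,1\}$ relies on the nontrivial comparison between the capacity and the Bhattacharyya parameter and constitutes the technical heart of Arikan's original theorem. The structural part, by contrast, is essentially combinatorial bookkeeping once the equivalence-of-inputs property quoted from the paragraph preceding the lemma is in hand.
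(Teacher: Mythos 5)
The paper does not prove this lemma at all; it is stated with the attribution ``The following lemma is from \cite{Arikan09}'' and left as a citation. Your reconstruction is, in substance, exactly Arikan's original argument and is correct: the structural identification of $\{W^{(k)}_\alpha\}_\alpha$ with $\{A_\alpha(W)\}_\alpha$ by induction on the level $m$, hinging on the independence-and-equivalence property of the level-$m$ channels that share the same middle block, followed by the martingale proof of polarization via the capacity-preservation identity $I(A_0(W))+I(A_1(W))=2I(W)$, the Bhattacharyya recursions $Z(A_1(W))=Z(W)^2$ and $Z(A_0(W))\leq 2Z(W)-Z(W)^2$, and martingale convergence forcing $I_\infty\in\{0,1\}$ a.s.\ with $\mathbf{E}(I_\infty)=I(W)$.

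One small remark on the structural half: in the paper's indexing for $W^{(m+1)}_{\alpha\beta a b}=A_a(W^{(m)}_{\alpha b\beta 0},W^{(m)}_{\alpha b\beta 1})$, the ``middle'' block of length $m+1$ in the index (which determines the equivalence class) grows from $\beta$ to $\beta a$, i.e., the newest transform bit is appended on the right. Since $A_{\gamma a}(W)=A_a(A_\gamma(W))$ by definition, the middle block already reads off the transform sequence in the correct order; no bit-reversal reindexing is actually required to see $W^{(m)}_{\cdot\,\gamma\,\cdot}\cong A_\gamma(W)$, and at $m=k$ one directly obtains the set equality. You would also want to check the orientation of $A_0$ vs $A_1$ against Arikan's $W^-$ and $W^+$: in this paper $A_0$ is the XOR-combined (degraded) channel and $A_1$ is the genie-aided (upgraded) channel, which is consistent with the Bhattacharyya inequalities as you have written them.
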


For $d\in \{0,1\}$, let $\mathcal{S}(d)$ denote $\{1\}$ if $d=1$ and $\{0,1\}$ if $d=0$.
For $\delta=(d_0,\ldots,d_{n-1})\in \{0,1\}^n$, let
\begin{gather*}
\mathcal{S}(\delta)=\{(a_{0},\ldots,a_{n-1})\in \{0,1\}^n: a_i\in \mathcal{S}(d_{n-1-i}), i\in [n]\}.
\end{gather*}
For example, $\mathcal{S}(0101)$ is the set $\{1a1b:a,b\in\{0,1\}\}$.
If the inputs of the synthetic channels $\{W^{(k)}_{\alpha}\}_{\alpha\in\{0,1\}^{k}}$ are set to be $\{u_{\alpha}\}_{\alpha\in \{0,1\}^{k}}$, then those of the underlying channels $\{W^{(0)}_{\alpha}\}_{\alpha\in \{0,1\}^{k}}$ are given by
\begin{align}
\{x^{(0)}_{\alpha}\}_{\alpha\in \{0,1\}^{k}}=\{u_{\delta}\}_{\delta\in \{0,1\}^{k}}G_{k},
\label{ak05}
\end{align}
where $G_{k}$ is the 0\,-1 matrix of order $2^k$ whose entry at the $\alpha$-column and the $\delta$-row
is 1 if and only if $\delta\in\mathcal{S}(\alpha)$.
For example,
$$G_1=\left(
\begin{array}{cc}
1 & 0 \\
1 & 1 \\
\end{array}
\right),\
G_2=\left(
  \begin{array}{cccc}
    1 & 0 & 0 & 0 \\
    1 & 0 & 1 & 0 \\
    1 & 1 & 0 & 0 \\
    1 & 1 & 1 & 1 \\
  \end{array}
\right),\
G_3=\left(
  \begin{array}{cccccccc}
    1 & 0 & 0 & 0 & 0 & 0 & 0 & 0 \\
    1 & 0 & 0 & 0 & 1 & 0 & 0 & 0 \\
    1 & 0 & 1 & 0 & 0 & 0 & 0 & 0 \\
    1 & 0 & 1 & 0 & 1 & 0 & 1 & 0 \\
    1 & 1 & 0 & 0 & 0 & 0 & 0 & 0 \\
    1 & 1 & 0 & 0 & 1 & 1 & 0 & 0 \\
    1 & 1 & 1 & 1 & 0 & 0 & 0 & 0 \\
    1 & 1 & 1 & 1 & 1 & 1 & 1 & 1 \\
  \end{array}
\right).
$$

The {\it communication scheme} of
the polar code of order $k$ is as follows.
The information sequence $\{u_{\alpha}\}_{\alpha\in \{0,1\}^k}$ is coded into $\{x^{(0)}_{\alpha}\}_{\alpha\in \{0,1\}^k}$ according to (\ref{ak05}) and then transmitted over the BIDMCs $\{W^{(0)}_{\alpha}\}_{\alpha\in \{0,1\}^k}$ respectively, whereas the receiver decodes the synthetic channels $\{W^{(k)}_{\alpha}\}_{\alpha\in \{0,1\}^k}$ sequentially by the {\it Successive Cancellation} (SC) decoding, which generates estimates for $\{u_{\alpha}\}_{\alpha\in \{0,1\}^k}$ by MLD in successive order.
If a \emph{genie} is willing to show us the true inputs of the noisy synthetic channels, which are called {\it frozen channels}, then reliable communication can be realized by decoding only the remaining synthetic channels. Notice that the role of such a genie can be almost realized by a pseudo-random sequence of very large period, which is known at the two ends of communication.

To determine the frozen channels of polar codes, it is desired to compute the capacities of the synthetic channels $\{A_{\alpha }(W)\}_{\alpha\in \{0,1\}^k}$ for large $k$. We will consider to give the compactest RSC forms of these synthetic channels.

\subsection{Arikan Transformations $\Delta_{m}(W)$ and $\nabla_{m}(W)$}

If $\{W_i\}_{i\in[t]}$ are independent copies of some
symmetric BIDMC $W$, we also write $\Delta(\{W_i\}_{i\in[t]})$ and $\nabla(\{W_i\}_{i\in[t]})$
as $\Delta_t(W)$ and $\nabla_t(W)$, respectively. We will deal with these Arikan transformations in this subsection.

According to Lemma~\ref{lem301}, Theorem~\ref{lem200} and (\ref{u01}) to (\ref{u04}) one can deduce the following lemma easily.
\begin{lemma}
Suppose $W$ is a symmetric BIDMC. Let $\Delta_0(W)=\mathrm{B}(0)$ and $\nabla_0(W)=\mathrm{B}(1/2)$.
Then,

\noindent
1.
For nonnegative integers $i,j$, we have
\begin{gather}
A_0(\Delta_i(W),\Delta_j(W))=\Delta_{i+j}(W),\
A_1(\nabla_i(W),\nabla_j(W))=\nabla_{i+j}(W).\label{tt1}
\end{gather}

\noindent
2. For positive integer $k$, we have
\begin{gather}
A_{0^k}(W)=\Delta_{2^k}(W),\
A_{1^k}(W)=\nabla_{2^k}(W).\label{tt11}
\end{gather}
3. For nonnegative numbers $p,q$ with $p+q\leq 1$, let $W_{p,q}$ be
the symmetric BIDMC $p\mathrm{B}(0)+q\mathrm{B}(1/2)+\overline{p+q}W$. For positive integer $t$, we have
\begin{gather}
\Delta_{t}(W_{p,q})\cong \overline{\overline{q}^t}\mathrm{B}\Big(\frac{1}{2}\Big)+
\sum_{i=0}^{t}\binom{t}{i}p^{t-i}\overline{p+q}^i\Delta_{i}(W),\label{pp7}\\
\nabla_{t}(W_{p,q})\cong \overline{\overline{p}^t}\mathrm{B}(0)+
\sum_{i=0}^{t}\binom{t}{i}q^{t-i}\overline{p+q}^i\nabla_{i}(W).\label{pp8}
\end{gather}
\end{lemma}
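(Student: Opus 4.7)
The plan is to handle the three parts in order, each reducing to a short argument on top of the definitions and Lemma~\ref{lem301}; the real work sits in Part 3.

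For Part 1, I would first dispose of the boundary cases $i=0$ or $j=0$: these are exactly the identities $A_0(\mathrm{B}(0),W)\cong W$ and $A_1(\mathrm{B}(1/2),W)\cong W$ from (\ref{mm301}), matching the conventions $\Delta_0(W)=\mathrm{B}(0)$ and $\nabla_0(W)=\mathrm{B}(1/2)$. For $i,j\geq 1$, the splitting identity (\ref{u05}) applied at position $i$ inside $\Delta_{i+j}(W)$ gives $\Delta_{i+j}(W)\cong A_0(\Delta_i(W),\Delta_j(W))$, where the two sub-trees are built from disjoint blocks of the $i+j$ underlying independent copies of $W$; permutation invariance (\ref{u03}) re-labels the blocks so that the two arguments of $A_0$ are two fresh independent copies as written on the left-hand side. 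The $\nabla$ case is strictly parallel through (\ref{u06}) and (\ref{u04}). Part 2 then follows by induction on $k$ with the convention $A_0(V)=A_0(V,V)$ for independent copies: the base $k=1$ gives $A_0(W)=A_0(W,W)=\Delta_2(W)$ directly, and for the inductive step $A_{0^{k+1}}(W)=A_0(A_{0^k}(W))\cong A_0(\Delta_{2^k}(W),\Delta_{2^k}(W))$ by the inductive hypothesis, so Part 1 with $i=j=2^k$ finishes it; the $\nabla$ side is identical.

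For Part 3, I would view $W_{p,q}$ as an RSC of three independent sub-channels $\mathrm{B}(0),\mathrm{B}(1/2),W$ with weights $p,q,\overline{p+q}$, and apply Theorem~\ref{lem200} iteratively (equivalently, induct on $t$) to expand $\Delta_t(W_{p,q})$ as an RSC over all $3^t$ tuples of sub-channel choices. On each tuple, two absorbing rules drawn from Lemma~\ref{lem301} collapse the inner $\Delta$: whenever any slot is $\mathrm{B}(1/2)$, iterated use of (\ref{mm300}) forces the whole $\Delta$ to $\mathrm{B}(1/2)$; otherwise every $\mathrm{B}(0)$ slot is transparent by (\ref{mm301}), so if exactly $i$ slots are $W$ then Part 1 collapses what remains to $\Delta_i(W)$. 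Grouping by type, the total weight of the ``no $\mathrm{B}(1/2)$, $i$ copies of $W$'' class is $\binom{t}{i}p^{t-i}\overline{p+q}^i$, and the complementary weight of the ``at least one $\mathrm{B}(1/2)$'' class is $1-(p+\overline{p+q})^t=1-\overline{q}^t=\overline{\overline{q}^t}$, yielding the stated formula for $\Delta_t(W_{p,q})$. The $\nabla$ identity is obtained by swapping absorbing roles: $\mathrm{B}(0)$ is now absorbing through (\ref{mm302}) and $\mathrm{B}(1/2)$ is transparent through (\ref{mm301}), so the same bookkeeping with $p$ and $q$ interchanged yields $\binom{t}{i}q^{t-i}\overline{p+q}^i$ and residual weight $\overline{\overline{p}^t}$.

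The main obstacle is the bookkeeping itself: I must keep the weight $\overline{p+q}$ (not $p+q$) attached to $W$ throughout the multinomial expansion so that the binomial sums factor as $(p+\overline{p+q})^t$ and $(q+\overline{p+q})^t$ and the residual probabilities come out to exactly $\overline{\overline{q}^t}$ and $\overline{\overline{p}^t}$, and I must take care that for $\Delta$ the transparent symbol is $\mathrm{B}(0)$ while the absorbing one is $\mathrm{B}(1/2)$, with the opposite assignment for $\nabla$, since these are easy to swap by mistake.
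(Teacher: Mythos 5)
Your proposal is correct and takes essentially the same route as the paper: the paper gives no detailed proof, merely asserting that the lemma follows from Lemma~\ref{lem301}, Theorem~\ref{lem200}, and (\ref{u01})--(\ref{u04}), and your argument is a faithful and correct fleshing-out of exactly that chain of citations (the splitting identities (\ref{u05})--(\ref{u06}) for Part~1, iterated Theorem~\ref{lem200} plus the absorbing/transparent rules (\ref{mm300}), (\ref{mm301}), (\ref{mm302}) for Part~3). One small redundancy: in Part~1 for $i,j\geq 1$, the split (\ref{u05}) already produces $A_0(\Delta_i(W),\Delta_j(W))$ on disjoint blocks of independent copies, so the invocation of permutation invariance (\ref{u03}) to ``re-label the blocks'' is unnecessary there, though harmless.
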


The following corollary is a direct application of Theorem~\ref{cor501'}.
\begin{corollary}\label{lem501}
For $W\cong\sum_{j\in[n]}q_j\mathrm{B}(\varepsilon_j)\in
\mathbb{B}_n^*$ with $\varepsilon_0^n\in (0,1)^n$ and $m\geq 1$,
\begin{gather}
\Delta_m(W)\cong\sum_{i_0^m\in[n]^m}\mathrm{B}(\star_{j\in[m]}\varepsilon_{i_j})\prod_{j\in[m]}q_{i_j},
\label{mm01}\\
\nabla_m(W)\cong\sum_{\{i_l\in[n],\sigma_{l}\in\{\varepsilon_{i_l},\overline{\varepsilon_{i_l}}\}\}_{
l\in[m]}}
\mathrm{B}(\diamond_{l\in[m]}\sigma_{l})\varpi(\sigma_0^m)\prod_{l\in [m]}q_{i_l},\label{mm02}
\end{gather}
where $\varepsilon$ and $\overline{\varepsilon}$ should be seen as different elements even if $\varepsilon=1/2$.
\end{corollary}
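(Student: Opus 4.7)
The plan is to specialize Theorem~\ref{cor501'} to the case where the $m$ underlying channels are independent copies of a single symmetric BIDMC $W$. First, I would set $W_l = W$ for every $l \in [m]$, so that in the notation of Theorem~\ref{cor501'} we have $n_l = n$, $q_{i_l,l} = q_{i_l}$ and $\varepsilon_{i_l,l} = \varepsilon_{i_l}$ uniformly in $l$. With these identifications, the product $\prod_{l \in [m]} q_{i_l,l}$ collapses to $\prod_{j \in [m]} q_{i_j}$ (after relabeling the dummy index), and the sequence $\{\varepsilon_{i_l,l}\}_{l \in [m]}$ appearing inside $\star$ and $\diamond$ becomes $\{\varepsilon_{i_l}\}_{l \in [m]}$.

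Next, by the definitions $\Delta_m(W) = \Delta(\{W_l\}_{l \in [m]})$ and $\nabla_m(W) = \nabla(\{W_l\}_{l \in [m]})$ introduced at the start of this subsection, the left-hand sides of the corollary are exactly those produced by the theorem. Substituting the specializations above into (\ref{mm01}) and (\ref{mm02}) of the theorem immediately yields the two formulas in the corollary. The convention that $\varepsilon$ and $\overline{\varepsilon}$ are treated as distinct labels even when $\varepsilon = 1/2$ is inherited verbatim from the theorem, so no additional bookkeeping is required.

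I expect no real obstacle here; the only thing worth checking is that the hypothesis $\varepsilon_0^n \in (0,1)^n$ of the corollary is compatible with the invocation of Theorem~\ref{cor501'}. Specifically, since every $\varepsilon_j$ lies in $(0,1)$, all iterated $\diamond$-operations appearing in the $\nabla_m$ expansion are well-defined by Lemma~\ref{lem201}, and the weights $\varpi(\sigma_0^m)$ are given by the first (product) branch of (\ref{u07}). Thus the specialization is entirely mechanical, and the corollary follows by direct substitution, with no case analysis or further induction needed beyond what Theorem~\ref{cor501'} has already established.
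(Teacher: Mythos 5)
Your proposal is correct and matches the paper's approach: the paper itself states that the corollary is ``a direct application of Theorem~\ref{cor501'}'' obtained by taking all $W_l$ to be copies of the same channel $W$, exactly as you do. Your brief sanity check that $\varepsilon_0^n\in(0,1)^n$ keeps the $\diamond$-operations in the group $(0,1)$ and selects the product branch of $\varpi$ is a reasonable addition, though not strictly required since the theorem already handles those cases.
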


Clearly, according to the properties of operations $\star, \diamond$ one can express $\Delta_m(W), \nabla_m(W)$ as RSCs of less BSCs than those shown in Corollary~\ref{lem501} by merging the equivalent sub-channels.
For $W\in\mathbb{B}_n^*$, let $\phi(W)=n$.

\begin{lemma}\label{lemf00}
1. For any vector $a_0^n$ of nonnegative integers with $\sum_{i\in[n]}a_i=m$, let $K(a_0^n)$ denote the number of vectors in
$$\{d_0^m\in[n]^m:|\{j\in [m]: d_j=i\}|=a_i,\,i\in[n]\}.$$ Then,
\begin{gather} K(a_0^n)=\binom{m}{a_0^n}=\frac{m!}{\prod_{i\in[n]}(a_i!)}.\label{gg20}
\end{gather}
2. Let $H_{m,n}$ denote the number of methods for putting $m$ balls into $n$ distinct boxes. Then,
\begin{align}
H_{m,n}=\binom{m+n-1}{m}=\frac{(m+n-1)!}{m!(n-1)!}.
\end{align}
\end{lemma}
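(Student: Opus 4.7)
The plan is to establish both parts by elementary combinatorial arguments, since each is a classical counting identity.

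For Part 1, I would argue that specifying a vector $d_0^m \in [n]^m$ with the required frequencies is equivalent to choosing an ordered partition $(T_0, T_1, \ldots, T_{n-1})$ of the index set $\{0, 1, \ldots, m-1\}$ with $|T_i| = a_i$, where $T_i = \{j \in [m] : d_j = i\}$. I would count such partitions sequentially: first choose the $a_0$ positions for value $0$ in $\binom{m}{a_0}$ ways, then the $a_1$ positions for value $1$ among the remaining $m - a_0$ slots in $\binom{m - a_0}{a_1}$ ways, and so on. The resulting product
\begin{equation*}
\binom{m}{a_0}\binom{m-a_0}{a_1}\cdots\binom{m - a_0 - \cdots - a_{n-2}}{a_{n-1}}
\end{equation*}
telescopes to $\tfrac{m!}{\prod_{i \in [n]} a_i!}$, which is exactly the claimed multinomial coefficient.

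For Part 2, I would use the standard stars-and-bars encoding. Distributing $m$ indistinguishable balls into $n$ distinct boxes is in bijection with choosing a nonnegative integer solution $(b_0, \ldots, b_{n-1})$ to $b_0 + \cdots + b_{n-1} = m$. Each such solution is encoded uniquely as a string of $m$ stars and $n-1$ separator bars, where $b_i$ counts the stars appearing between the $i$-th and $(i+1)$-st bars. The total length of such a string is $m + n - 1$, and a string is determined by the positions of its $m$ stars, giving $\binom{m+n-1}{m}$ possibilities, which matches the claim.

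The main obstacle: essentially none. Both identities are textbook results (the multinomial coefficient and stars-and-bars), and the only care needed is to state the bijections cleanly; no analytic or probabilistic machinery is required, and nothing from the channel theory developed earlier in the paper enters the argument. The lemma is set up purely to be invoked later when counting the equivalent sub-channels that must be merged in the RSC representations of $\Delta_m(W)$ and $\nabla_m(W)$ from Corollary~\ref{lem501}.
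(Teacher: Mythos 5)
Your proposal is correct: the sequential binomial-coefficient telescoping for the multinomial count and the stars-and-bars bijection for weak compositions are exactly the standard arguments, and both are executed without gaps. The paper itself omits the proof (``The proof for this lemma is simple, we omit it here''), so there is no authorial argument to compare against; your write-up simply supplies the textbook verification the authors chose not to print.
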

\begin{proof}
The proof for this lemma is simple, we omit it here.
\end{proof}
According to Corollary~\ref{lem501} and Lemma \ref{lemf00}, one can deduce the following theorem easily.
\begin{theorem}\label{cor10}
For $m\geq 1$ and $W\cong\sum_{i\in[n]}q_i\mathrm{B}(\varepsilon_i)\in\mathbb{B}_n^*$ with $\varepsilon_0^n\in (0,1)^n$,

\noindent 1.
$\Delta_m(W)$ is equivalent to an RSC of $\phi(\Delta_m(W))\leq \binom{m+n-1}{m}$ BSCs as the following
\begin{align}
\Delta_m(W)\cong\sum_{\sum_{i\in[n]}a_i=m}
\mathrm{B}\big(\star_{i\in[n]}\varepsilon_{i}^{\star a_i}\big)\binom{m}{a_0^n}\prod_{i\in[n]}q_{i}^{a_i},\label{pp10}
\end{align}
where $a_0^n$ is a vector of nonnegative integers, $\varepsilon^{\star 0}=0$ and, for integer $a>0$,
$$\varepsilon^{\star a}=\underbrace{\varepsilon\star\cdots\star\varepsilon}_{a}
=\sum_{0\leq i\leq \lfloor (a-1)/2\rfloor}\binom{a}{2i+1}\varepsilon^{2i+1}\overline{\varepsilon}^{a-2i-1}.$$

\noindent 2. $\nabla_m(W)$ is equivalent to an RSC of
\begin{gather}
\phi(\nabla_m(W))\leq
1+\sum_{\omega=1}^{m}2^{\omega-1}\binom{n}{\omega}\sum_{0\leq 2b\leq m-\omega}\binom{m-2b-1}{m-2b-\omega}
\end{gather}
BSCs as the following
\begin{align}
\nabla_m(W)\cong&
\sum_{\Omega\subset[n],\sum_{i\in\Omega}a_i+2\sum_{j\in[n]}b_j=m}
\binom{m}{a_0^n+b_0^n,b_0^n}
\prod_{i\in[n]}q_{i}^{a_i+2b_i}(\varepsilon_i\overline{\varepsilon_i})^{b_i}\nonumber\\
&\ \ \ \ \ \ \ \ \ \ \
\sum_{\{\sigma_{l}\in\{\varepsilon_{l},\overline{\varepsilon}_{l}\}\}_{l\in\Omega}}
\frac{1}{2}\bigg(\prod_{i\in\Omega}\sigma_i^{a_i}+
\prod_{i\in\Omega}\overline{\sigma_i}^{a_i}\bigg)\mathrm{B}(\diamond_{i\in\Omega}\sigma^{\diamond a_i}_{i}),
\label{pp4}
\end{align}
where $\varepsilon$ and $\overline{\varepsilon}$ should be always seen as different elements, $b_j\geq 0$ for $j\in[n]$, $a_i>0$ for $i\in\Omega$, $a_l=0$ for $l\in[n]\setminus\Omega$,
$$\sigma^{\diamond a}=\underbrace{\sigma\diamond\cdots\diamond\sigma}_{a}=
\frac{\sigma^a}{\sigma^a+\overline{\sigma}^a},$$
and the inner summation in (\ref{pp4}) denotes $\mathrm{B}(1/2)$ if $\Omega=\emptyset$.
\end{theorem}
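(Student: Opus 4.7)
The plan is to derive both parts of the theorem by regrouping the expansions of $\Delta_m(W)$ and $\nabla_m(W)$ in Corollary~\ref{lem501} according to the multiset structure of the summation indices, and then invoking Lemma~\ref{lemf00}.

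For part 1, the summand in (\ref{mm01}) depends on the tuple $(i_0,\ldots,i_{m-1})\in[n]^m$ only through its composition vector $a_0^n$ with $a_i=|\{j\in[m]:i_j=i\}|$, since $\star$ is commutative and associative by Lemma~\ref{lem201}. Grouping tuples by composition introduces the multinomial factor $\binom{m}{a_0^n}$ via Lemma~\ref{lemf00}(1), which gives (\ref{pp10}). The bound $\phi(\Delta_m(W))\le\binom{m+n-1}{m}$ is immediate from Lemma~\ref{lemf00}(2) since distinct BSCs in the decomposition correspond to at most one per composition. The closed form for $\varepsilon^{\star a}$ follows by a short binomial induction from $a\star b=\bar{a}b+a\bar{b}$.

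For part 2, I would perform a two-layer regrouping of (\ref{mm02}): first fix the index composition, and within each index block fix the sign split. Using $\varepsilon\diamond\overline{\varepsilon}=1/2$ and $x\diamond 1/2=x$ from Lemma~\ref{lem201}, within the positions assigned to index $i$, if $a_i+b_i$ carry one sign $\sigma_i$ and $b_i$ carry $\overline{\sigma_i}$, the $\diamond$-contribution of those positions collapses to $\sigma_i^{\diamond a_i}$, while balanced blocks with $a_i=0$ are absorbed entirely. This parameterizes each orbit of terms by $(\Omega,\{a_i\}_{i\in\Omega},\{\sigma_i\}_{i\in\Omega},\{b_j\}_{j\in[n]})$ with $\Omega=\{i:a_i>0\}$. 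A direct count shows that the number of $(i_0^m,\sigma_0^m)$ tuples mapping to such an orbit is $\binom{m}{a_0^n+b_0^n,\,b_0^n}$, and the factor $\varpi(\sigma_0^m)=\tfrac12(\prod_l\sigma_l+\prod_l\overline{\sigma_l})$ simplifies because paired positions contribute $\varepsilon_i\overline{\varepsilon_i}$ symmetrically into both products, leaving the unpaired part $\tfrac12(\prod_{i\in\Omega}\sigma_i^{a_i}+\prod_{i\in\Omega}\overline{\sigma_i}^{a_i})$. Combined with the $q$-weights $\prod_i q_i^{a_i+2b_i}$, this yields (\ref{pp4}).

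For the bound on $\phi(\nabla_m(W))$, the BSC in each orbit depends only on $(\Omega,\{a_i\},\{\sigma_i\})$, and the symmetry $\mathrm{B}(\varepsilon)\cong\mathrm{B}(\overline{\varepsilon})$ together with $\overline{a\diamond b}=\overline{a}\diamond\overline{b}$ collapses the global sign flip. Fixing $\omega=|\Omega|\ge 1$ and a total pair count $b$ with $2b\le m-\omega$, there are $\binom{n}{\omega}$ choices of $\Omega$, $2^{\omega-1}$ sign classes, and $\binom{m-2b-1}{\omega-1}=\binom{m-2b-1}{m-2b-\omega}$ positive compositions of $m-2b$ into $\omega$ parts by stars-and-bars; adding the $\omega=0$ case (a single $\mathrm{B}(1/2)$) produces the stated bound. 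The main obstacle will be the bookkeeping in the $\nabla_m$ case: making the triple accounting of indices, signs, and pairings consistent with the explicit form (\ref{pp4}) is the only nontrivial step, while the remaining counts are routine.
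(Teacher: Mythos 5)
Your proposal is correct and takes essentially the same route as the paper: regroup the expansions from Corollary~\ref{lem501} by orbit structure (composition vector for $\Delta_m$, and the triple of index composition, sign splits, and pairings for $\nabla_m$), invoke Lemma~\ref{lemf00} for the multinomial and stars-and-bars counts, and use $\mathrm{B}(\diamond_{i\in\Omega}\sigma_i^{\diamond a_i})\cong\mathrm{B}(\diamond_{i\in\Omega}\overline{\sigma_i}^{\diamond a_i})$ to halve the sign classes. The paper states this very tersely (``follows simply from Corollary~\ref{lem501} and Lemma~\ref{lemf00}'') and only spells out the bound on $\phi(\nabla_m(W))$; your write-up fills in the same bookkeeping in more detail.
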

\begin{proof}
(\ref{pp10}) and (\ref{pp4}) follow
simply from Corollary~\ref{lem501} and Lemma \ref{lemf00}.
According to (\ref{pp10}) we see $\phi(\Delta_m(W))\leq H_{m,n}=\binom{m+n-1}{m}$.
According to (\ref{pp4}) and $\mathrm{B}(\diamond_{i\in\Omega}\sigma^{\diamond a_i}_{i})\cong\mathrm{B}(\diamond_{i\in\Omega}\overline{\sigma}^{\diamond a_i}_{i})$ we see
\begin{align*}
\phi(\nabla_m(W))\leq& 1+\sum_{\omega=1}^{m}2^{\omega-1}\binom{n}{\omega}\sum_{0\leq 2b\leq m-\omega}H_{m-\omega-2b,\omega}\\
=&1+\sum_{\omega=1}^{m}2^{\omega-1}\binom{n}{\omega}\sum_{0\leq 2b\leq m-\omega}\binom{m-2b-1}{m-2b-\omega},
\end{align*}
where $\omega$ and $b$ correspond the cardinality of the set $\Omega$ and the summation $\sum_{i\in[n]}b_i$, respectively.
\end{proof}

If some of the cross-over probabilities are equal to $1/2$, then Theorem \ref{cor10}
can be refined further as the following theorem.
\begin{theorem}
\label{lem80}
For $m\geq 1$ and $W\cong\sum_{j\in[n+1]}q_j\mathrm{B}(\varepsilon_j)\in\mathbb{B}_{n+1}^*$ with $0<\varepsilon_0<\cdots<\varepsilon_{n-1}<\varepsilon_n=1/2$,

\noindent 1.
$\Delta_{m}(W)$ is equivalent to an RSC of $\phi(\Delta_{m}(W))\leq H_{m,n-1}+1=\binom{m +n-1}{m}+1$ BSCs as the following
\begin{align}
\Delta_{m}(W)\cong&
(1-\overline{q_n}^{m})\mathrm{B}\big(\frac{1}{2}\big)+\sum_{\sum_{i\in[n]}a_i=m}
\mathrm{B}\big(\star_{i\in[n]}\varepsilon_{i}^{\star a_i}\big)
\binom{m}{a_0^n}\prod_{i\in[n]}q_{i}^{a_i},
\label{pp11}
\end{align}
where $a_0^n$ is vector of nonnegative integers.

\noindent 2. $\nabla_{m}(W)$ is equivalent to an RSC of
\begin{gather}
\phi(\nabla_{m}(W))\leq 1+\sum_{\omega=1}^{m}2^{\omega-1}\binom{n}{\omega}\binom{m}{\omega}
\end{gather}
BSCs as the following
\begin{align}
\nabla_{m}(W)\cong&
\sum_{s+\sum_{j\in[n]}(a_j+2b_j)=m}
\binom{m}{s,a_0^n+b_0^n,b_0^n}
q_n^s\prod_{i\in[n]}q_{i}^{a_i+2b_i}(\varepsilon_i\overline{\varepsilon_i})^{b_i}\nonumber\\
&\ \ \ \ \ \
\sum_{\{\sigma_l\in\{\varepsilon_l,\overline{\varepsilon}_l\}\}_{a_l>0,l\in[n]}}
\frac{1}{2}\bigg(\prod_{i\in[n]}\sigma_i^{a_i}+\prod_{i\in[n]}\overline{\sigma_i}^{a_i}\bigg)
\mathrm{B}(\diamond_{i\in[n]}\sigma^{\diamond a_i}_{i}),\label{pp5}
\end{align}
where integer $s\geq 0$, $a_0^n,b_0^n$ are vectors of nonnegative integers, $\sigma^{\diamond 0}=1/2$ and the inner summation in (\ref{pp5}) denotes $\mathrm{B}(1/2)$ if $\sum_{i\in[n]}a_i=0$.
\end{theorem}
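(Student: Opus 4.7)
The plan is to reduce Theorem \ref{lem80} to the already established Theorem \ref{cor10} by peeling off the $\mathrm{B}(1/2)$ summand of $W$ and then propagating its effect through the $\Delta_m,\nabla_m$ recursions via equations \eqref{pp7}--\eqref{pp8}.

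Write $W\cong q_n\mathrm{B}(1/2)+\overline{q_n}\,W'$ where $W'\cong\sum_{j\in[n]}(q_j/\overline{q_n})\,\mathrm{B}(\varepsilon_j)\in\mathbb{B}_n^*$; since $\varepsilon_0^n\in(0,1/2)^n$, the underlying channel $W'$ satisfies the hypotheses of Theorem \ref{cor10}. This puts $W$ into the form $W_{p,q}$ of \eqref{pp7}--\eqref{pp8} with $p=0$, $q=q_n$, and underlying BIDMC $W'$. Because $p=0$, the sum in \eqref{pp7} collapses to its $i=t$ term, giving
\[\Delta_m(W)\cong(1-\overline{q_n}^{\,m})\mathrm{B}(1/2)+\overline{q_n}^{\,m}\,\Delta_m(W').\]
Substituting the expansion \eqref{pp10} applied to $W'\in\mathbb{B}_n^*$, the prefactor $\overline{q_n}^{\,m}$ exactly cancels the $\overline{q_n}^{-m}$ produced by $\prod_{i\in[n]}(q_i/\overline{q_n})^{a_i}$ (using $\sum_i a_i=m$), which yields formula \eqref{pp11}. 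The BSC count is the extra $\mathrm{B}(1/2)$ term plus the $H_{m,n}=\binom{m+n-1}{m}$ nonnegative-integer vectors $a_0^n$ with $\sum_i a_i=m$.

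For $\nabla_m(W)$, the choice $p=0$ in \eqref{pp8} annihilates the $\mathrm{B}(0)$ term (since $\overline{p}^m=1$) and yields
\[\nabla_m(W)\cong\sum_{i=0}^{m}\binom{m}{i}q_n^{\,m-i}\,\overline{q_n}^{\,i}\,\nabla_i(W').\]
Substituting the expansion \eqref{pp4} for each $\nabla_i(W')$, setting $s=m-i$, and using $\sum_{j\in[n]}(a_j+2b_j)=i$ so that $\overline{q_n}^{\,i}$ cancels the corresponding $\overline{q_n}^{-(a_j+2b_j)}$ factors, one obtains \eqref{pp5} after collapsing $\binom{m}{s}\binom{m-s}{a_0^n+b_0^n,b_0^n}=\binom{m}{s,a_0^n+b_0^n,b_0^n}$. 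For the BSC bound, the distinct channels are parametrized by the subset $\Omega=\{i:a_i>0\}\subset[n]$, the sign vector $(\sigma_i)_{i\in\Omega}\in\{\varepsilon_i,\overline{\varepsilon_i}\}^{\Omega}$ (equivalent in pairs under $\sigma\mapsto\overline{\sigma}$), and a positive tuple $(a_i)_{i\in\Omega}$ with $\sum_{i\in\Omega}a_i\in[\omega,m]$ where $\omega=|\Omega|$; the hockey-stick identity $\sum_{A=\omega}^{m}\binom{A-1}{\omega-1}=\binom{m}{\omega}$ then produces the stated total $\sum_{\omega=1}^{m}2^{\omega-1}\binom{n}{\omega}\binom{m}{\omega}$ above the $\mathrm{B}(1/2)$ term.

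The main obstacle is not conceptual but organizational: the reduction itself follows immediately from specializing \eqref{pp7}--\eqref{pp8} at $p=0$, but for $\nabla_m$ one must carefully align the triple of indices $(s,a_0^n,b_0^n)$ inherited from \eqref{pp8} on the outside and from \eqref{pp4} on the inside, verify that the resulting $\overline{q_n}$-powers cancel cleanly, and confirm that the product of a binomial and a trinomial coefficient consolidates into $\binom{m}{s,a_0^n+b_0^n,b_0^n}$; the remaining manipulations are direct substitutions and the hockey-stick count sketched above.
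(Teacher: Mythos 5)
Your argument is correct, and the two halves of it have different relationships to the paper. For part 1, your route is exactly the paper's: specialize \eqref{pp7} at $p=0$, $q=q_n$ with the reduced channel $W'\in\mathbb{B}_n^{*}$, then plug in \eqref{pp10}. (Incidentally your count $H_{m,n}+1=\binom{m+n-1}{m}+1$ is the right one; the ``$H_{m,n-1}$'' in the theorem statement is internally inconsistent with $H_{m,n}=\binom{m+n-1}{m}$ from Lemma~\ref{lemf00} and is evidently a misprint for $H_{m,n}$.)

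For part 2, your route differs from the paper's. You peel off the $\mathrm{B}(1/2)$ summand first, specializing \eqref{pp8} at $p=0$ to get $\nabla_m(W)\cong\sum_{i=0}^{m}\binom{m}{i}q_n^{m-i}\overline{q_n}^{\,i}\nabla_i(W')$, and then apply \eqref{pp4} to each $\nabla_i(W')$, verifying that the $\overline{q_n}$ powers cancel and that $\binom{m}{s}\binom{m-s}{a_0^n+b_0^n,b_0^n}=\binom{m}{s,a_0^n+b_0^n,b_0^n}$. The paper instead applies \eqref{pp4} directly to $W\in\mathbb{B}_{n+1}^{*}$ (index set $[n+1]$) and then collapses, for each fixed $a_0^n,b_0^n$, the inner sum over $(a_n,b_n)$ with $a_n+2b_n=s$ using the identities $2\sum_{b_n\le t}\binom{s}{b_n}=2^s$ (for $s=2t+1$) and $\binom{s}{t}+2\sum_{b_n< t}\binom{s}{b_n}=2^s$ (for $s=2t$), together with $\varepsilon_n=\tfrac12$ and $\sigma\diamond\tfrac12=\sigma$. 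Your approach avoids this explicit odd/even case analysis because the binomial regrouping is already packaged inside the recursion \eqref{pp8}; the paper's approach is more self-contained and makes the $\varepsilon_n=\tfrac12$ cancellations visible term by term. Both yield the same RSC and, as you note, the same count via the same hockey-stick evaluation $\sum_{A=\omega}^m\binom{A-1}{\omega-1}=\binom{m}{\omega}$.
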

\begin{proof}
Clearly, (\ref{pp11}) follows from (\ref{pp7}) and (\ref{pp10}).
For any vectors $a_0^n, b_0^n$ of nonnegative integers with $\sum_{i\in[n]}(a_i+2b_i)\leq m$,
let $s=m-\sum_{i\in[n]}(a_i+2b_i)$. If $s=2t+1$ is odd, then
we have
\begin{align*}
&2\sum_{b_n=0}^t\binom{m}{a_0^n+b_0^n,b_0^n}=
2^{s}\binom{m}{s,a_0^n+b_0^n,b_0^n},
\end{align*}
where $a_n=s-2b_n$ is always positive. If $s=2t$ is even, then we have
\begin{align*}
&\binom{m}{a_0^n+b_0^n,b_0^n,t,t}+
2\sum_{b_n=0}^{t-1}\binom{m}{a_0^{n+1}+b_0^{n+1},b_0^{n+1}}
=2^{s}\binom{m}{s,a_0^n+b_0^n,b_0^n},
\end{align*}
where $a_n=s-2b_n$ is always positive too. Therefore, we see that (\ref{pp5}) follows from (\ref{pp4}), $\varepsilon_n=\frac{1}{2}$ and $\sigma\diamond \frac{1}{2}=\sigma$ for any $\sigma\in(0,1)$.
Furthermore, from (\ref{pp5}) and $\overline{\diamond_{i\in[n]}\sigma^{\diamond a_i}_{i}}=\diamond_{i\in[n]}\overline{\sigma}^{\diamond a_i}_{i}$ we see
\begin{align*}
\phi(\nabla_m(W))\leq& 1+\sum_{\omega=1}^{m}2^{\omega-1}\binom{n}{\omega}\sum_{0\leq k\leq m-\omega}H_{m-\omega-k,\omega}\\
=&1+\sum_{\omega=1}^{m}2^{\omega-1}\binom{n}{\omega}\sum_{0\leq k\leq m-\omega}\binom{m-k-1}{m-k-\omega}\\
=&1+\sum_{\omega=1}^{m}2^{\omega-1}\binom{n}{\omega}\binom{m}{\omega},
\end{align*}
where $\omega$ corresponds the number of positive integers $a_i$, $k$ corresponds the sum of $s$ and $2\sum_{i\in[n]}b_i$.
\end{proof}

We note that, for any synthetic channel generated in polar codes over symmetric BIDMCs,
it is not difficult to give an exact expression of RSC of BSCs for it by using
Theorems~\ref{cor10} and \ref{lem80}.

\subsection{Arikan Transformations $A_{\alpha}(\mathrm{E}(q))$ and $A_{\alpha}(\mathrm{B}(\varepsilon))$}

It has been pointed out in \cite{Arikan09} that all the Arikan transformations are BECs when
the underlying channel is a BEC.
Indeed, if the underlying channel $W$ is $\mathrm{E}(q)\cong\overline{q}\mathrm{B}(0)+q\mathrm{B}(1/2)$, the BEC with erasure probability $q$, then for any integer $t$ from (\ref{pp7}) and (\ref{pp8}) we have
\begin{gather}
\Delta_{t}(\mathrm{E}(q))\cong
\overline{q}^t\mathrm{B}(0)+\overline{\overline{q}^t}\mathrm{B}(1/2)
\cong\mathrm{E}(\overline{\overline{q}^t}), \\
\nabla_{t}(\mathrm{E}(q))\cong
\overline{q^t}\mathrm{B}(0)+q^t\mathrm{B}(1/2)\cong\mathrm{E}(q^t).
\end{gather}
Furthermore, for nonnegative integer $s$, let $f_s$ denote the map over $[0,1]$ defined by $f_s(p)=\overline{p}^{2^s}$.
Therefore, for any sequence $\alpha=0^{t_1}1^{t_2}\cdots0^{t_{2r-1}}1^{t_{2r}}$ in $\{0,1\}^{*}$ with $t_1\geq 0,t_{2r}\geq 0$ and $t_i>0, 2\leq i\leq 2r-1$, we have
\begin{gather}
A_{\alpha}(\mathrm{E}(q))\cong \overline{F_{\alpha}(q)}\mathrm{B}(0)
+F_{\alpha}(q)\mathrm{B}(1/2)
\cong \mathrm{E}(F_{\alpha}(q)),\\
I(A_{\alpha}(\mathrm{E}(q)))=\overline{F_{\alpha}(q)},
\end{gather}
where $F_{\alpha}$ is the compound map $f_{t_{2r}}\circ f_{t_{2r-1}}\circ\cdots \circ f_{t_1}$ over $[0,1]$.
For example, for $\alpha=0110=0^11^20^11^0$, we have
\begin{gather*}
F_{0110}(q)=1-(1-(1-(1-q)^2)^4)^2,\\
I(A_{0110}(\mathrm{E}(q)))=\overline{F_{0110}(q)}=(1-(1-(1-q)^2)^4)^2.
\end{gather*}

If the underlying channel $W$ is a general BIDMC but BEC,
for any Arikan transformation of $W$ one can deduce an exact expression of RSC of BSCs by using
the results shown in the last subsection.
However, as pointed in the proof of Theorem~\ref{lem80}, the Arikan transformation $\nabla_{m}(W)$ given in (\ref{pp5}) is not in the compactest form, some of the sub-channels may be merged further.

In the following theorem we show the compactest forms for some of the synthetic channels generated in polar codes when the underlying channel is a BSC.

\begin{theorem}
For $l\geq 0$ and $\varepsilon\in(0,1/2)$, we have
\begin{gather}
A_{0^l}(\mathrm{B}(\varepsilon))\cong \mathrm{B}(\varepsilon_l),\label{mm307'}
\end{gather}
where $\varepsilon_l=\varepsilon^{\star 2^l}$. For $k\geq 0$, we have
\begin{align}
A_{0^l1^{k+1}}(\mathrm{B}(\varepsilon))\cong
\binom{2^{k+1}}{2^{k}}(\varepsilon_l\overline{\varepsilon_l})^{2^{k}}
\mathrm{B}\Big(\frac{1}{2}\Big)+
\sum_{i=1}^{2^{k}}\binom{2^{k+1}}{2^{k}-i}
\frac{\varepsilon_l^{2i}+\overline{\varepsilon_l}^{2i}}
{(\varepsilon_l\overline{\varepsilon_l})^{i-2^k}}
\mathrm{B}(\varepsilon_l^{\diamond (2i)}).\label{pp0}
\end{align}
For $i\geq 0$, let $\varepsilon_{l,i}=(\varepsilon_l^{\diamond 2})^{\star 2^i}$ and $q_{l,i}
=(\varepsilon_l^{2}+\overline{\varepsilon_l}^{2})^{2^i}$. Then, for $k\geq 0$, we have
\begin{align}
A_{0^l10^i1^k}(\mathrm{B}(\varepsilon))
\cong&
\sum_{0\leq b\leq 2^{k-1}}
\binom{2^k}{2^k-2b,b,b}
\overline{q_{l,i}}^{2^k-2b}(q_{l,i}^2\varepsilon_{l,i}\overline{\varepsilon_{l,i}})^{b}
\mathrm{B}\Big(\frac{1}{2}\Big)\nonumber\\
+\sum_{a=1}^{2^{k}}\mathrm{B}(\varepsilon_{l,i}^{\diamond a})&
(\varepsilon_{l,i}^{a}+\overline{\varepsilon_{l,i}}^{a})q_{l,i}^a
\sum_{s+2b=2^k-a}
\binom{2^k}{s,a+b,b}
\overline{q_{l,i}}^s(q_{l,i}^2\varepsilon_{l,i}\overline{\varepsilon_{l,i}})^{b}.\label{pp1'''}
\end{align}
Let
$p_{l,i}=q_{l,i}^2(\varepsilon_{l,i}^2+\overline{\varepsilon_{l,i}}^2)+2q_{l,i}\overline{q_{l,i}}$ and
$r_{l,i}=2q_{l,i}\overline{q_{l,i}}/p_{l,i}$. Then, for $t\geq 0$, we have
\begin{align}
&A_{0^l10^i10^t}(\mathrm{B}(\varepsilon))\cong
a_{l,i,t}\mathrm{B}\Big(\frac{1}{2}\Big)+
\sum_{s=0}^{2^t}b_{l,i,t,s}
\mathrm{B}\big(\beta_{l,i,t,s}\big),\label{pp2''}
\end{align}
where $a_{l,i,t}=1-p_{l,i}^{2^t}$,
$b_{l,i,t,s}=\binom{2^t}{s}\overline{r_{l,i}}^{2^t-s}r_{l,i}^{s}p_{l,i}^{2^t}$ and
$\beta_{l,i,t,s}=(\varepsilon_{l,i}^{\diamond 2})^{\star(2^t-s)}\star(\varepsilon_{l,i})^{\star s}$ for $0\leq s\leq 2^t$. Furthermore, we have $0<\beta_{l,i,t,0}<\beta_{l,i,t,1}<\cdots<\beta_{l,i,t,2^t}<1/2$ and
\begin{align}
&A_{0^l10^i10^t1}(\mathrm{B}(\varepsilon))
\cong
\Big(a_{l,i,t}^2+2\sum_{s=0}^{2^t}b_{l,i,t,s}^2\beta_{l,i,t,s}\overline{\beta_{l,i,t,s}}\Big)
\mathrm{B}\Big(\frac{1}{2}\Big)+\nonumber\\
&\ \ \ \ 2a_{l,i,t}\sum_{s=0}^{2^t}b_{l,i,t,s}\mathrm{B}(\beta_{l,i,t,s})
+\sum_{s=0}^{2^t}b_{l,i,t,s}^2(\beta_{l,i,t,s}^2+\overline{\beta_{l,i,t,s}}^2)
\mathrm{B}(\beta_{l,i,t,s}^{\diamond 2})+
\nonumber\\
&\ \ \ \ 2\sum_{0\leq s<r\leq 2^t}b_{l,i,t,s}b_{l,i,t,r}\sum_{\sigma\in\{\beta_{l,i,t,r},\overline{\beta_{l,i,t,r}}\}}
(\beta_{l,i,t,s}\sigma+\overline{\beta_{l,i,t,s}}\,\overline{\sigma})
\mathrm{B}(\beta_{l,i,t,s}\diamond\sigma).
\label{h001}
\end{align}
\end{theorem}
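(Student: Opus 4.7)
The overall strategy is to reduce each of the five identities to the two key decompositions $A_{0^k}(W) \cong \Delta_{2^k}(W)$ and $A_{1^k}(W) \cong \nabla_{2^k}(W)$ from (\ref{tt11}), combined with the explicit $\Delta$, $\nabla$ formulas (\ref{u08})--(\ref{u09}), (\ref{pp7})--(\ref{pp8}), and Theorems~\ref{cor10} and \ref{lem80}. Identity (\ref{mm307'}) is then immediate: $A_{0^l}(\mathrm{B}(\varepsilon)) \cong \Delta_{2^l}(\mathrm{B}(\varepsilon))$ combined with (\ref{u08}) gives $\mathrm{B}(\varepsilon^{\star 2^l}) = \mathrm{B}(\varepsilon_l)$.

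For (\ref{pp0}) the plan is to apply $A_{1^{k+1}} \cong \nabla_{2^{k+1}}$ on top of $\mathrm{B}(\varepsilon_l)$. Using (\ref{u09}) with $m = 2^{k+1}$ and the single crossover probability $\varepsilon_l$, the contribution of each tuple $\sigma_0^{m} \in \{\varepsilon_l, \overline{\varepsilon_l}\}^m$ depends only on the number $j$ of coordinates equal to $\varepsilon_l$, yielding $\binom{m}{j}$ copies of the same sub-BSC. Pairing $j$ with $m - j$ collapses their $\diamond$-images (which are complementary) into a single BSC by $\mathrm{B}(\varepsilon)\cong\mathrm{B}(\overline{\varepsilon})$; the substitution $i = 2^k - j$ produces exactly the sum over $i \in \{1, \ldots, 2^k\}$ in (\ref{pp0}), while the unpaired central index $j = 2^k$ contributes the $\mathrm{B}(1/2)$ term with the stated weight.

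For (\ref{pp1'''}) I first apply $A_{0^i} \cong \Delta_{2^i}$ to the $k = 0$ case of (\ref{pp0}), which is $\overline{q_{l,0}}\mathrm{B}(1/2) + q_{l,0}\mathrm{B}(\varepsilon_{l,0})$. Viewed as $W_{0, \overline{q_{l,0}}}$ based on $\mathrm{B}(\varepsilon_{l,0})$, (\ref{pp7}) with $p = 0$ collapses to the single surviving summand $\overline{q_{l,i}}\mathrm{B}(1/2) + q_{l,i}\mathrm{B}(\varepsilon_{l,i}) \in \mathbb{B}_2^*$. Applying next $A_{1^k} \cong \nabla_{2^k}$ via Theorem~\ref{lem80} with $n = 1$, $\varepsilon_0 = \varepsilon_{l,i}$ and $q_0 = q_{l,i}$, I expand (\ref{pp5}) and separate the $a_0 = 0$ summands (which feed the $\mathrm{B}(1/2)$ mass) from the $a_0 \geq 1$ summands; the resulting multinomial expressions match (\ref{pp1'''}) term by term.

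For (\ref{pp2''}) I specialize (\ref{pp1'''}) to $k = 1$ and rearrange the three surviving BSC components as $(1 - p_{l,i})\mathrm{B}(1/2) + p_{l,i}[r_{l,i}\mathrm{B}(\varepsilon_{l,i}) + \overline{r_{l,i}}\mathrm{B}(\varepsilon_{l,i}^{\diamond 2})]$. Applying $A_{0^t} \cong \Delta_{2^t}$ via (\ref{pp7}) with $p = 0$ peels off the $a_{l,i,t}\mathrm{B}(1/2)$ piece, and the remaining $p_{l,i}^{2^t}\Delta_{2^t}(r_{l,i}\mathrm{B}(\varepsilon_{l,i}) + \overline{r_{l,i}}\mathrm{B}(\varepsilon_{l,i}^{\diamond 2}))$ expands via (\ref{pp10}) with $n = 2$ into the sum over $s$. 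The strict chain $0 < \beta_{l,i,t,0} < \cdots < \beta_{l,i,t,2^t} < 1/2$ follows from the strict monotonicity of $x \mapsto x \star c$ on $[0, 1/2]$ for $c \in [0, 1/2)$, together with $\varepsilon_{l,i}^{\diamond 2} < \varepsilon_{l,i} < 1/2$; the latter reduces inductively to $\varepsilon \in (0, 1/2)$ via $\varepsilon^{\star 2} = 2\varepsilon\overline{\varepsilon} < 1/2$ and the explicit form of $\diamond$. Finally, for (\ref{h001}) I apply $A_1 \cong \nabla_2$ to (\ref{pp2''}): distribute over the RSC by (\ref{at01}) into bilinear pairs $A_1(\mathrm{B}(r), \mathrm{B}(r'))$, evaluate each via (\ref{at32}) (invoking (\ref{mm301}) whenever $r$ or $r'$ equals $1/2$), and regroup by BSC parameter. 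The main obstacles are the bookkeeping that matches the multinomial sums of Theorem~\ref{lem80} against the three-index sum in (\ref{pp1'''}), and the verification that the $\beta_{l,i,t,s}$ remain strictly separated in $(0, 1/2)$ so that (\ref{pp2''}) is genuinely in compactest form; beyond these, all steps are routine applications of the operational identities already established.
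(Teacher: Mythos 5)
Your proposal is correct and follows essentially the same route as the paper's proof: reduce each identity to the preceding one by composing $A_{0^l}\cong\Delta_{2^l}$ and $A_{1^k}\cong\nabla_{2^k}$, then instantiate the explicit $\Delta$/$\nabla$ expansions (the paper invokes (\ref{pp4}), (\ref{pp5}), (\ref{pp7}), (\ref{pp10}), which package exactly the grouping-by-multiplicities you do by hand from (\ref{u09})), and finally merge complementary BSCs via $\mathrm{B}(\sigma)\cong\mathrm{B}(\overline{\sigma})$. Your monotonicity argument for the chain of $\beta_{l,i,t,s}$ and your handling of (\ref{h001}) via (\ref{at01}) and (\ref{at32}) also mirror the paper's.
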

\begin{proof}
From (\ref{tt11}) and (\ref{pp10}), we see (\ref{mm307'}).

From (\ref{tt11}), (\ref{pp4}) and (\ref{mm307'}), we see
\begin{align*}
&A_{0^l1^{k+1}}(\mathrm{B}(\varepsilon)) \cong A_{1^{k+1}}(A_{0^l}(\mathrm{B}(\varepsilon))) \cong A_{1^{k+1}}(\mathrm{B}(\varepsilon_l))\cong\nabla_{2^{k+1}}(\mathrm{B}(\varepsilon_l))\\
\cong&
\sum_{a+2b=2^{k+1}}\binom{2^{k+1}}{a+b,b}(\varepsilon_l\overline{\varepsilon_l})^b
\sum_{\sigma\in\{\varepsilon_l,\overline{\varepsilon_l}\}}\frac{\sigma^a+\overline{\sigma}^a}{2}
\mathrm{B}(\sigma^{\diamond a}),
\end{align*}
where the inner summation denotes $\mathrm{B}(1/2)$ when $a=0$.
Therefore, by taking $a=2i$ and $b=2^k-i$ we see that (\ref{pp0}) follows from $\mathrm{B}(\sigma^{\diamond a})\cong\mathrm{B}(\overline{\sigma}^{\diamond a})$.

From (\ref{pp7}), (\ref{mm307'}) and (\ref{pp0}) we have
\begin{align*}
&A_{0^l10^i1^k}(\mathrm{B}(\varepsilon))
\cong A_{0^i1^k}(A_{0^l1}(\mathrm{B}(\varepsilon)))\cong
A_{0^i1^k}(2\varepsilon_l\overline{\varepsilon_l}\mathrm{B}(1/2)
+(\varepsilon_l^2+\overline{\varepsilon_l}^2)\mathrm{B}(\varepsilon_l^{\diamond 2}))\nonumber\\
\cong &
A_{1^k}(\overline{q_{l,i}}\mathrm{B}(1/2)+q_{l,i}\mathrm{B}(\varepsilon_{l,i}))
\cong \nabla_{2^k}(\overline{q_{l,i}}\mathrm{B}(1/2)+q_{l,i}\mathrm{B}(\varepsilon_{l,i}))
\nonumber\\
\cong&
\sum_{0\leq b\leq 2^{k-1}}
\binom{2^k}{2^k-2b,b,b}
\overline{q_{l,i}}^{2^k-2b}q_{l,i}^{2b}(\varepsilon_{l,i}\overline{\varepsilon_{l,i}})^{b}
\mathrm{B}\Big(\frac{1}{2}\Big)\nonumber\\
&+\sum_{a=1}^{2^{k}}\sum_{\sigma\in\{\varepsilon_{l,i},\overline{\varepsilon_{l,i}}\}}
\frac{1}{2}\bigg(\sigma^{a}+\overline{\sigma}^{a}\bigg)
\mathrm{B}(\sigma^{\diamond a})\sum_{s+2b=2^k-a}
\binom{2^k}{s,a+b,b}
\overline{q_{l,i}}^sq_{l,i}^{a+2b}(\varepsilon_{l,i}\overline{\varepsilon_{l,i}})^{b}
\nonumber\\
\cong&
\sum_{0\leq b\leq 2^{k-1}}
\binom{2^k}{2^k-2b,b,b}
\overline{q_{l,i}}^{2^k-2b}(q_{l,i}^2\varepsilon_{l,i}\overline{\varepsilon_{l,i}})^{b}
\mathrm{B}\Big(\frac{1}{2}\Big)\nonumber\\
&+\sum_{a=1}^{2^{k}}\mathrm{B}(\varepsilon_{l,i}^{\diamond a})
(\varepsilon_{l,i}^{a}+\overline{\varepsilon_{l,i}}^{a})q_{l,i}^a
\sum_{s+2b=2^k-a}
\binom{2^k}{s,a+b,b}
\overline{q_{l,i}}^s(q_{l,i}^2\varepsilon_{l,i}\overline{\varepsilon_{l,i}})^{b},
\end{align*}
i.e. (\ref{pp1'''}) is valid for $k\geq 0$.

From (\ref{pp7}), (\ref{pp10}) and (\ref{pp1'''}) we have
\begin{align*}
&A_{0^l10^i10^t}(\mathrm{B}(\varepsilon))=A_{0^t}(A_{0^l10^i1}(\mathrm{B}(\varepsilon)))\nonumber\\
\cong& A_{0^t}\Big(q_{l,i}^2(\varepsilon_{l,i}^2+\overline{\varepsilon_{l,i}}^2)
\mathrm{B}(\varepsilon_{l,i}^{\diamond 2})
+2q_{l,i}\overline{q_{l,i}}\mathrm{B}(\varepsilon_{l,i})+
(2\varepsilon_{l,i}\overline{\varepsilon_{l,i}}q_{l,i}^2+\overline{q_{l,i}}^2)
\mathrm{B}(1/2)\Big)\nonumber\\
\cong&\Delta_{2^t}\Big(p_{l,i}\big(\overline{r_{l,i}}\mathrm{B}(\varepsilon_{l,i}^{\diamond 2})
+r_{l,i}\mathrm{B}(\varepsilon_{l,i})\big)+\overline{p_{l,i}}\mathrm{B}(1/2)\Big)\nonumber\\
\cong&(1-p_{l,i}^{2^t})\mathrm{B}(1/2)+p_{l,i}^{2^t}\Delta_{2^t}
\big(\overline{r_{l,i}}\mathrm{B}(\varepsilon_{l,i}^{\diamond 2})
+r_{l,i}\mathrm{B}(\varepsilon_{l,i})\big)\nonumber\\
\cong &
\big(1-p_{l,i}^{2^t}\big)\mathrm{B}\Big(\frac{1}{2}\Big)+p_{l,i}^{2^t}
\sum_{s=0}^{2^t}\binom{2^t}{s}\overline{r_{l,i}}^{2^t-s}r_{l,i}^{s}
\mathrm{B}\big((\varepsilon_{l,i}^{\diamond 2})^{\star (2^t-s)}\star\varepsilon_{l,i}^{\star s}\big),
\end{align*}
i.e. (\ref{pp2''}) is valid for any $t\geq 0$.

Since for any $\alpha,\sigma\in(0,1/2)$ we have $$1/2>\alpha\star\sigma>\max\{\alpha,\sigma\}\geq\min\{\alpha,\sigma\}>\alpha\diamond\sigma>0,$$ we see easily $0<\beta_{l,i,t,0}<\beta_{l,i,t,1}<\cdots<\beta_{l,i,t,2^t}<1/2$.
Therefore, (\ref{h001}) follows from (\ref{at01}), (\ref{at32}) and (\ref{pp2''}).
\end{proof}

Notice that, by taking $i=0$ in (\ref{pp1'''}), from (\ref{pp0}) one can deduce easily that,
for $k\geq 0$ and $0\leq a\leq 2^k$,
\begin{gather}
\sum_{s+2b=2^k-a}\binom{2^k}{s,a+b,b}2^{s}=\binom{2^{k+1}}{2^k-a},
\end{gather}
where $s, b$ assume nonnegative integers.

\subsection{Number of BSCs in Arikan Transformations}
Let $0\leq\varepsilon_0<\cdots<\varepsilon_{n-1}<\varepsilon_n=\frac{1}{2}$ and
$W\cong\sum_{j\in[n+1]}q_j\mathrm{B}(\varepsilon_j)$.
Clearly, the sizes of the output sets of $W$, $A_0(W)$, $A_1(W)$, $A_{00}(W)$, $A_{01}(W)$, $A_{10}(W)$, $A_{11}(W)$ are at least $2n+1$, $(2n+1)^2$, $2(2n+1)^2$, $(2n+1)^4$, $2(2n+1)^4$, $4(2n+1)^4$, $8(2n+1)^4$, respectively.
However, on the numbers of RSCs in their LRP-oriented forms, according to Theorem~\ref{lem80}, we see
\begin{align}\label{pp60}
\phi(A_0(W))&\leq \frac{n^2+n}{2}+1,\\
\phi(A_1(W))&\leq n^2+n+1,\label{pp61}\\
\phi(A_{00}(W))&\leq \binom{n+3}{4}+1=\frac{(n^2+n)(n^2+5n+6)}{24}+1,\label{pp62}\\
\phi(A_{11}(W))&\leq 4\binom{n}{1}+12\binom{n}{2}+16\binom{n}{3}+8\binom{n}{4}+1\nonumber\\
&=\frac{(n^2+n)(n^2+n+4)}{3}+1,\label{pp63}
\end{align}
and by using (\ref{pp60}) and (\ref{pp61}) we have
\begin{align}\label{pp66}
\phi(A_{01}(W))&\leq \frac{(n^2+n)(n^2+n+2)}{4}+1,\\
\phi(A_{10}(W))&\leq \frac{(n^2+n)(n^2+n+1)}{2}+1.
\end{align}

Moreover, for $\alpha\in \{0,1\}^k$, let $\varphi(\alpha)$ denote the number defined by $\varphi(1^l)=\varphi(0^l)=(2^l)!2^{2^l}$ for $l\geq 0$ and, for $\beta\in \{0,1\}^s$,
\begin{gather*}
\varphi(\beta 01^l)=(2^l)!(\varphi(\beta 0))^{2^l},\
\varphi(\beta 10^l)=(2^l)!(\varphi(\beta 1))^{2^l}.
\end{gather*}
Then, we have
\begin{theorem}
Let $0\leq\varepsilon_0<\cdots<\varepsilon_{n-1}<\varepsilon_n=\frac{1}{2}$ and
$W\cong\sum_{j\in[n+1]}q_j\mathrm{B}(\varepsilon_j)$.
Then, for $\alpha\in \{0,1\}^k$, we have
\begin{align}
\label{f001}
\phi(A_{\alpha}(W))\leq h_{\alpha}(n)=\frac{2^{b(\alpha)}(2n)^{2^k}}{\varphi(\alpha)}+g_{\alpha}(n),
\end{align}
where $g_{\alpha}(n)$ is a polynomial of order at most $2^k-1$ and
$b(\alpha)=a_{k-1}2^{k-1}+a_{k-2}2^{k-2}+\cdots+a_{0}$ if $\alpha=a_{k-1}a_{k-2}\cdots a_0$.
In particular, for any $\alpha\in \{0,1\}^k$ the average number of elements with the same likelihood ratios in the output set of $A_{\alpha}(W)$
is at least $\varphi(\alpha)/2$ when $n$ is sufficiently large.
\end{theorem}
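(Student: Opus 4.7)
The plan is to establish $\phi(A_\alpha(W))\leq h_\alpha(n)$ by induction on $k=|\alpha|$, using the recursive structure hidden in $\varphi(\alpha)$ together with Theorem~\ref{lem80}. I would first record an auxiliary invariant: by Lemma~\ref{lem301} and a direct induction exploiting formulas~(\ref{pp11}) and~(\ref{pp5}), every $A_\alpha(W)$ is symmetric and retains $\mathrm{B}(1/2)$ as one of its component BSCs. Hence at each inductive stage $A_\alpha(W)\in\mathbb{B}_{M}^{*}$ with $M\leq h_\alpha(n)$ and one of the $M$ components is $\mathrm{B}(1/2)$, so Theorem~\ref{lem80} applies with its parameter $n$ equal to $M-1$. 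The base cases $\alpha=0^k$ and $\alpha=1^k$ follow at once from $A_{0^k}(W)\cong\Delta_{2^k}(W)$ and $A_{1^k}(W)\cong\nabla_{2^k}(W)$ via~(\ref{tt11}): Theorem~\ref{lem80} with $m=2^k$ gives leading $n^{2^k}$-coefficients $1/(2^k)!$ and $2^{2^k-1}/(2^k)!$, which, using $\varphi(0^k)=\varphi(1^k)=(2^k)!\,2^{2^k}$, $b(0^k)=0$ and $b(1^k)=2^k-1$, coincide with $2^{b(\alpha)}(2n)^{2^k}/\varphi(\alpha)$.

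For the inductive step I would decompose a mixed $\alpha$ as $\alpha=\gamma d^l$, where $d^l$ ($l\geq 1$) is the rightmost maximal run of $\alpha$ and $\gamma$, of length $k-l$, ends in the opposite bit $1-d$. Then $A_\alpha(W)\cong\Delta_{2^l}(A_\gamma(W))$ if $d=0$ and $A_\alpha(W)\cong\nabla_{2^l}(A_\gamma(W))$ if $d=1$, and feeding $M:=h_\gamma(n)$ (with the $\mathrm{B}(1/2)$ component peeled off) into Theorem~\ref{lem80} with $m=2^l$ produces an upper bound whose leading $M^{2^l}$-part is $M^{2^l}/(2^l)!$ for $d=0$ and $2^{2^l-1}M^{2^l}/(2^l)!$ for $d=1$. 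The identities $b(\gamma 0^l)=2^l b(\gamma)$, $b(\gamma 1^l)=2^l b(\gamma)+2^l-1$, combined with the recursion $\varphi(\alpha)=(2^l)!\varphi(\gamma)^{2^l}$ (valid in both cases since $\gamma$ ends in the opposite bit), then verify by a direct algebraic check that the resulting leading $n^{2^k}$-coefficient is exactly $2^{b(\alpha)}2^{2^k}/\varphi(\alpha)$. Binomial expansion of $M^{2^l}=(c_\gamma n^{2^{k-l}}+g_\gamma(n))^{2^l}$, together with $\deg g_\gamma\leq 2^{k-l}-1$, shows that every non-leading contribution has $n$-degree at most $2^k-1$ and can therefore be absorbed into a polynomial $g_\alpha(n)$ of degree at most $2^k-1$.

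For the particular claim on the average number of equi-likelihood outputs, I would run a parallel induction on $|\alpha|$ to derive $|\mathcal{Y}_{A_\alpha(W)}|\geq 2^{b(\alpha)}(2n+1)^{2^k}$: the base $|\mathcal{Y}_W|\geq 2n+1$ is immediate from the $n+1$ underlying BSCs, $A_0(W_0,W_1)$ has alphabet $\mathcal{Y}_0\times\mathcal{Y}_1$, and $A_1$ additionally records $u_0\in\{0,1\}$ so that its alphabet has size $2|\mathcal{Y}_0||\mathcal{Y}_1|$. Dividing by the bound on $\phi(A_\alpha(W))$ gives an average of at least $2^{b(\alpha)}(2n+1)^{2^k}/h_\alpha(n)$, and since the leading $n^{2^k}$-coefficients in numerator and denominator agree up to a factor of $\varphi(\alpha)$, this ratio tends to $\varphi(\alpha)$ as $n\to\infty$ and exceeds $\varphi(\alpha)/2$ once $n$ is sufficiently large. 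The main obstacle is the algebraic bookkeeping linking the run-based recursion of $\varphi$ to the leading terms produced by Theorem~\ref{lem80}, particularly verifying that the factor $2^{2^l-1}$ coming from the $\nabla_{2^l}$ bound precisely accounts for the $2^l-1$ increment in $b(\alpha)$ when a fresh $1$-run is appended, and that each lower-order contribution really does have $n$-degree bounded by $2^k-1$ so that it genuinely fits inside $g_\alpha(n)$.
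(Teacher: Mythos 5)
Your proposal follows essentially the same route as the paper: base cases $\alpha=0^k$ and $\alpha=1^k$ from Theorem~\ref{lem80} via (\ref{tt11}), then induction peeling off the rightmost maximal constant run $\delta\in\{0^l,1^l\}$ and using $\varphi(\sigma\delta)=(2^l)!\varphi(\sigma)^{2^l}$ and $b(\sigma\delta)=2^l b(\sigma)+b(\delta)$ to match leading coefficients, with the output-alphabet count $2^{b(\alpha)}|\mathcal{Y}_W|^{2^k}$ giving the final average bound. The only difference is that you make explicit the (correct, but tacit in the paper) invariant that each $A_\alpha(W)$ keeps a positive $\mathrm{B}(1/2)$ component so that Theorem~\ref{lem80} is applicable at every stage.
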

\begin{proof}
If $\alpha=0^k$, then we have $b(\alpha)=0$, $\varphi(\alpha)=(2^k)!2^{2^k}$ and, according to (\ref{tt11}) and Theorem~\ref{lem80},
there is a polynomial $g_{0^k}(n)$ of order at most $2^k-1$ such that
\begin{align*}
\phi(A_{\alpha}(W))\leq 1+\binom{n+2^k-1}{2^k}=\frac{n^{2^k}}{(2^k)!}+g_{0^k}(n)
=\frac{2^{b(\alpha)}(2n)^{2^k}}{\varphi(\alpha)}+g_{0^k}(n),
\end{align*}
i.e., (\ref{f001}) is valid for $\alpha=0^k$.

If $\alpha=1^k$, then we have $b(\alpha)=2^k-1$, $\varphi(\alpha)=(2^k)!2^{2^k}$ and, according to (\ref{tt11}) and Theorem~\ref{lem80},
there is a polynomial $g_{1^k}(n)$ of order at most $2^k-1$ such that
\begin{align*}
\phi(A_{\alpha}(W))&\leq
1+\sum_{\omega=1}^{2^k}2^{\omega-1}\binom{n}{\omega}\binom{2^k}{\omega}
=\frac{2^{2^k-1}n^{2^k}}{(2^k)!}+g_{1^k}(n)
=\frac{2^{b(\alpha)}(2n)^{2^k}}{\varphi(\alpha)}+g_{1^k}(n),
\end{align*}
i.e., (\ref{f001}) is valid for $\alpha=1^k$.

Now we consider to prove (\ref{f001}) by induction on $k$.
Assume that (\ref{f001}) is valid for any sequence $\alpha$ of length smaller than $k$.
Suppose $\alpha\in \{0,1\}^k\setminus\{0^k,1^k\}$ and $l$ is the largest integer
such that $\alpha=\sigma\delta$, $\delta\in\{0^l,1^l\}$ and $\sigma\in\{0,1\}^{k-l}$. Then, we have $1\leq l<k$,
$\varphi(\delta)=(2^l)!2^{2^l}$, $\varphi(\alpha)=(2^l)!(\varphi(\sigma))^{2^l}$, $b(\alpha)=b(\sigma)2^l+b(\delta)$ and
\begin{gather*}
\phi(A_{\sigma}(W))\leq h_{\sigma}(n)=\frac{2^{b(\sigma)}(2n)^{2^{k-l}}}{\varphi(\sigma)}+g_{\sigma}(n).
\end{gather*}
Hence, we see there is a polynomial $g_{\alpha}(n)$ of order at most $2^k-1$ such that
\begin{align*}
\phi(A_{\alpha}(W))&=\phi(A_{\delta}(A_{\sigma}(W)))\\
&\leq \frac{2^{b(\delta)}(2h_{\sigma}(n))^{2^l}}{\varphi(\delta)}+g_{\delta}(h_{\sigma}(n))\\
&=\frac{2^{b(\delta)}2^{2^l}2^{b(\sigma)2^l}(2n)^{2^{k-l}2^l}}{(2^l)!2^{2^l}(\varphi(\sigma))^{2^l}}+g_{\alpha}(n)\\
&=\frac{2^{b(\alpha)}(2n)^{2^k}}{\varphi(\alpha)}+g_{\alpha}(n),
\end{align*}
where $g_{\delta}(h_{\sigma}(n))$ is a polynomial of order at most
$(2^l-1)2^{k-l}\leq 2^k-1$. Thus, (\ref{f001}) is valid for any sequence $\alpha$ of length $k$.

Since the size of the output set of $A_{\alpha}(W)$ is $2^{b(\alpha)}(2n+1)^{2^k}$, from (\ref{f001}) we see that the average size of the non-empty sets $L_{A_{\alpha}(W)}(\varepsilon)$
is at least $\varphi(\alpha)/2$ when $n$ is sufficiently large.
\end{proof}

Notice that one can deduce easily that, for each $\alpha\in \{0,1\}^k$, the number $\varphi(\alpha)2^{1-2^{k+1}}$ is an odd integer and
\begin{align*}
2^{2^{k+1}-1}\leq\varphi(\alpha)\leq (2^k)!2^{2^k}.
\end{align*}

\section{Conclusions}
\label{sec05}
The focus of this paper is to investigate the principal properties of the synthetic channels that are iteratively formed through Arikan transformations during the construction of polar codes. Given that evaluating the reliability of these synthetic channels is of great significance.

For binary input discrete memoryless channels (BIDMCs), we defined their equivalence and symmetry based on their likelihood ratio profiles (LRPs). Then, closed-form formulas
for the computation of the LRPs of Arikan transformations of BIDMCs were derived. By utilizing the random switching channels (RSCs) of BIDMCs and converting channel transformations into algebraic operations, we obtained concise and compact expressions for a few Arikan transformations of symmetric BIDMCs. For the case when the underlying channel is a binary symmetric channel, we shown the compactest representations of RSCs for many synthetic channels generated in polar code construction.
In the end, we derived a lower bound for the average number of elements that possess the same likelihood ratio within the output alphabet of any synthetic channel generated in polar code construction.

\end{document}